\newtheorem{theorem}{Theorem}
\newtheorem{proposition}[theorem]{Proposition}
\theoremstyle{definition}
\newtheorem{definition}[theorem]{Definition}
\begin{document}

\title{Matrix-weighted networks for modeling multidimensional dynamics}

\author{Yu Tian}%
\affiliation{Nordita, Stockholm University and KTH
Royal Institute of Technology, Sweden}

\author{Sadamori Kojaku}
\affiliation{School of Systems Science and Industrial Engineering,
Binghamton University, USA}

\author{Hiroki Sayama}
\affiliation{School of Systems Science and Industrial Engineering,
Binghamton University, USA}
\affiliation{Waseda University, Japan}

\author{Renaud Lambiotte}
\affiliation{University of Oxford, UK}


\begin{abstract}
Networks are powerful tools for modeling interactions in complex systems.
While traditional networks use scalar edge weights, many real-world systems involve multidimensional interactions.
For example, in social networks, individuals often have multiple interconnected opinions that can affect different opinions of other individuals, which can be better characterized by matrices. 
We propose a novel, general framework for modeling such multidimensional interacting dynamics: matrix-weighted networks (MWNs).
We present the mathematical foundations of MWNs and examine consensus dynamics and random walks within this context.
Our results reveal that the coherence of MWNs gives rise to non-trivial steady states that generalize the notions of communities and structural balance in traditional networks.
\end{abstract}

\maketitle
Networks have emerged as a powerful model for studying interacting systems across a wide range of disciplines, enhancing our understanding of complex systems such as the human brain, the Internet, and human societies \cite{newman2018networks}.
Networks are constructed from pairwise interactions between entities which, combined together, allow for signals to propagate in connected systems.
Much effort has been made to understand the intricate interplay between dynamical systems on networks and underlying network structure \cite{Jadbabaie2003coordination,Couzin2005animal,porter2016dynamics,lambiotte2022dynamics} with two main foci: examining the network properties that influence dynamics on networks, and utilizing dynamical processes to extract valuable information about underlying network structure.
For example, modular structure in networks can effectively simplify the description of dynamical systems \cite{lambiotte2022dynamics,rosvall2008maps} and,  in the presence of negative links, structural balance can help separate distinct dynamical behaviors \cite{cartwrightharary1956gbalance,Altafini2012opinion,tian2022sign}.

How to characterize pairwise interaction is a key modeling decision that significantly impacts the analysis of network dynamics.
Widespread choices include binary values indicating the presence or absence of an edge between two nodes, a positive real number to indicate connection strength, and a real value whose sign represents consolidation or contradiction~\cite{newman2018networks}. Recent studies have expanded the concept of edge weights to complex values~\cite{bottcher2022complex,tian2023complex} to represent the ``state'' of connections.

Recent years have also witnessed a movement towards multi-dimensional characterizations of node states and their interactions in various network models, such as multiplex networks and multi-layer networks \cite{kivela2014multilayer,sayama2016productG,sayama2018multilayer,bianconi2018multilayer}.
In general, higher-dimensional variables and multi-dimensional interactions naturally arise in a variety of applications \cite{schiwinger1960unitary,barooah2005electric,singer2012image,novikov2015NN,friedkin2016belief}.
For example, in image processing, each image necessarily lies in a higher-dimensional space, and it has been shown that more robust results can be obtained by considering the transformations in higher dimensions \cite{singer2012image}.
In social systems, it is natural that more than one statement is propagating through the networks, where the belief in one affects the belief in others \cite{friedkin2016belief,parsegov2016novel,ye2019consensus,nedic2019graph,ye2020continuous,ahn2020opinion,bizyaeva2023multi}.

However, the details of such nontrivial multidimensional interactions {\em between nodes} are not fully reflected in traditional network models. Edges in existing models with multidimensional node states are typically assumed to simply transmit a vector from one node to another, and/or to compare two vector states and react according to their distances, e.g.~\cite{friedkin2016belief}. These simple assumptions would be too limited to capturing more complex multidimensional interactions identified in various real-world network systems. This observation has led us to study multidimensional {\em interaction} models. 
Models for multidimensional interactions have been previously studied, particularly in opinion dynamics~\cite{axelrod1997dissemination,hansen2021opinion}. In these models, agents possess opinion vectors on various topics that can change in nontrivial ways through communication, depending on how agents interpret their neighbors’ opinions. While these models have found applications in machine learning~\cite{bodnar2022neural,braithwaite2024heterogeneous,barbero2022sheaf,gebhart2022graph}, their translation to physical systems and complex network dynamics remains largely unexplored, leaving opportunities to enhance our understanding of multifaceted interactions in complex networks.


In this article, we introduce a framework specifically designed for multidimensional interactions on networks, \textit{matrix-weighted networks} (MWNs). This is a generalization of conventional scalar- or complex-weighted network models, in which one can represent a linear transformation of state vectors between a pair of nodes connected by an edge. In most general settings, the weights can be any matrices of any size or shape (even rectangular ones may be allowed if the dimensions of node states differ from node to node) and self-loops are permitted, where not only multidimensional interactions of node states but also any network with any partitions can be analyzed from the angle of MWNs.
In this paper, we start with the cases of square matrix weights and directed edge reciprocity which, as we argue, can be seen as the generalization of classical undirected networks to the matrix-weighted setting.
More specifically, we consider the cases when edges characterize reciprocal interactions, where the effect of one characteristic dimension of node states on another dimension from one node to another is the same as the one when the orders of both nodes and dimensions are reversed.
This choice lends itself to the study of orthogonal matrices, where isometries are defined on each edge.
We start by providing mathematical foundations for MWNs, and then
characterize the dynamical behavior through two key dynamics, consensus dynamics and random walks that are reformulated with this framework, both theoretically and numerically. Our main results concern the definition of the coherence of MWNs, which determines the existence of non-trivial steady states for the dynamics and is associated with large-scale patterns generalizing the notions of communities and structural balance. Finally, we discuss opportunities for future research and connections with existing network models. The code and documentation to reproduce all numerical results can be found at our GitHub repository~\cite{github}. 

\paragraph{Matrix-weighted networks (MWNs).}
\begin{figure}[htbp]
    \centering
    \includegraphics[width=.48\textwidth]{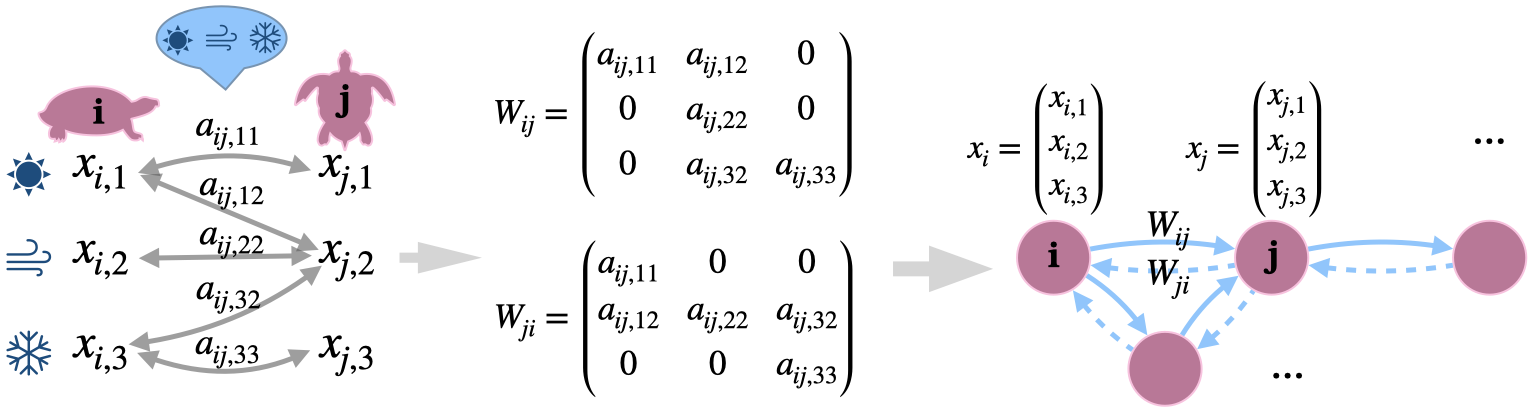}
    \caption{Illustration of a MWN, where each node is equipped with 3 state variables. In an opinion dynamics setting, each variable can be interpreted as an opinion on a specific topic. Here, we illustrate how the state vector of node $v_i$ affects the state vector of $v_j$ and vice versa. }
    \label{fig:eg}
\end{figure}
We consider networks in the form of weighted graphs $G = (V, E, \mathcal{W})$, being connected and directed, where $V = \{v_1, v_2, \dots, v_n\}$ is the node set and $E\subset V\times V$ is the edge set.
We use a matrix $\mathbf{W}_{ij}\in\mathbb{R}^{n_d\times n_d}$ to characterise the relationship between nodes $v_i$ and $v_j$, where $n_d$ is the characteristic dimension. $\mathbf{W}_{ij}$ is the all-zero matrix if and only if there is no directed edge from $v_i$ to $v_j$. Otherwise, the matrix encodes how a $n_d$-dimensional signal on node $v_j$ impacts the signal on node $v_i$. See Fig.~\ref{fig:eg} for an example.
We define the $(n~n_d) \times (n~n_d)$ supra-weight matrix $\mathcal{W}$ to have block structure, with the $i,j$ block being $\mathbf{W}_{ij}$, the matrix weight between nodes $v_i$ and $v_j$.
Furthermore, we decompose each matrix weight into two parts,
one accounting for the magnitude $w_{ij} = \norm{\mathbf{W}_{ij}}_2$, and the other for the transformation $\mathbf{R}_{ij}$ with $\norm{\mathbf{R}_{ij}}_2 = 1$ (for $w_{ij} > 0$),
such that $\mathbf{W}_{ij} = w_{ij}\mathbf{R}_{ij}$.
In this paper, we assume that $\mathbf{W}_{ij} = \mathbf{W}_{ji}^T$,
i.e., the interaction between the same pair of dimensions is reciprocal between the same pair of nodes.
Then we have $w_{ij} = w_{ji}$, $\mathbf{R}_{ij} = \mathbf{R}_{ji}^T$, and finally, $\mathcal{W}^T = \mathcal{W}$. This choice ensures that the eigenvalues of $\mathcal{W}$ are real numbers, and can be understood as an extension of undirected networks to the matrix-weighted setting.

We define the degree (or strength) of a node $v_i$ as the sum of the matrix norm of edges incident on this node,
\begin{align*}
    d_i = \sum_j\norm{\mathbf{W}_{ij}}_2 = \sum_{j}w_{ij},
\end{align*}
and the supra-degree matrix as
\begin{align*}
    \mathcal{D} = \mathbf{D}\otimes \mathbf{I},
\end{align*}
where $\mathbf{D}$ is the diagonal matrix with $\mathbf{d} = (d_i)$ on its diagonal, $\mathbf{I}$ is the identity matrix in $\mathbb{R}^{n_d\times n_d},$ and $\otimes$ denotes Kronecker product.
We define the supra-Laplacian matrix as
\begin{align*}
    \mathcal{L}
    &= \mathcal{D} - \mathcal{W}.
\end{align*}
We note that there are other types of Laplacian considered in the multidimensional setting, such as the one based on node-edge interactions \cite{hansen2021opinion}, but ours is more consistent with existing network science models, thus more appropriate to adapt network-based methods.
If we consider a vector $\mathbf{x} = (\mathbf{x}_1; \mathbf{x}_2; \dots; \mathbf{x}_n)$ with $\mathbf{x}_i\in\mathbb{R}^{n_d}$ for each $i$, then we can write
\begin{align}
    \mathbf{x}^T\mathcal{L}\mathbf{x} = \sum_{(v_i,v_j), (v_j,v_i)\in E}w_{ij}\norm{\mathbf{x}_i - \mathbf{R}_{ij}\mathbf{x}_j}_2^2 \ge 0.
    \label{equ:laplacian-bilinear}
\end{align}
Hence, the supra-Laplacian matrix maintains the property of the matrix Laplacian to be positive semi-definite.
We then define the supra-random-walk Laplacian matrix as
\begin{align*}
    \mathcal{L}_{rw}
    &= \mathcal{I} - \mathcal{P},
\end{align*}
where $\mathcal{I}\in\mathbb{R}^{nn_d\times nn_d}$ is the identity matrix and $\mathcal{P} = \mathcal{D}^{-1}\mathcal{W}$ is the supra-transition matrix.
Clearly, if $n_d = 1$, we retrieve the classic definitions. We will show later that the supra-Laplacian matrix and the supra-random-walk Laplacian matrix are closely related to the consensus dynamics and the random walks, respectively, in the multi-dimensional case.


\paragraph{Coherence.} We define the \textit{transformation} of a directed path
$P$ with composing edges $e_1, e_2 \ldots, e_k$ as
\begin{align*}
    \mathbf{R}(P) = \mathbf{R}(e_1)\mathbf{R}(e_2)\dots\mathbf{R}(e_k),
\end{align*}
where $\mathbf{R}(e)$ returns the transformation of edge $e$. The transformation of a directed cycle is defined similarly.
This notion is essential to understand how signals propagate along walks in the graph. As we will see, in certain graphs, different walks to the same node will be associated to different linear transformations, and hence to conflicting signals, while in other graphs, walks to the same node will always result in the same linear transformation and thus to reinforcement akin to resonance. The presence or absence of conflicts along the walks is described by the notion of {\em coherence}, which we define now, and determines the asymptotic of the dynamics studied in this paper.

We define a matrix-weighted network $G$ to be \textit{coherent} if and only if the transformation of every directed cycle is $\mathbf{I}$.
An immediate consequence for a graph to be coherent is that all matrix weights are orthogonal under the assumptions adopted in this paper, since for each edge $(v_i, v_j)\in E$, we also have $(v_j, v_i)\in E$, and for the directed cycle of the two reciprocal edges to have transformation $\mathbf{I}$, we have $\mathbf{R}_{ij}\mathbf{R}_{ji} = \mathbf{R}_{ij}\mathbf{R}_{ij}^T = \mathbf{I}$. Note that this definition, which naturally describes the geometrical problems considered in this paper, is strict, hence we also quantify the \textit{level of coherence} by how far the network is from being coherent, e.g., through the portion of cycles that do not have $\mathbf{I}$ as the transformation.
Furthermore, the notion can also be relaxed in other contexts, as we discuss in Supplemental Material (SM).

\begin{figure}[htbp]
    \centering
    \includegraphics[width=0.45\linewidth]{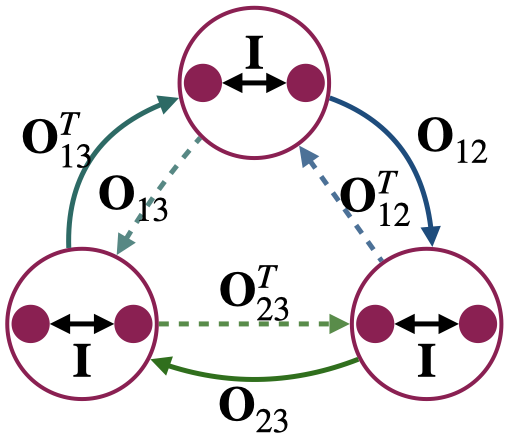}
    \caption{Example of the block structure of a coherent MWN, where the label indicates the orthogonal matrix weights and $\mathbf{O}_{13} = \mathbf{O}_{12}\mathbf{O}_{23}$}
    \label{fig:blocks}
\end{figure}
The notion of coherence also characterizes particular block structures in the MWN. Suppose there is a partition of $G$ such that (i) any edges within each block have transformation $\mathbf{I}$, and (ii) any edges between the same pair of blocks have the same transformation, thus we can consider each block as a super node; see Fig.~\ref{fig:blocks}. Then if (iii) any directed cycles of such super nodes have transformation $\mathbf{I}$, we can immediately see that the MWN is coherent, since any directed cycles in the network, apart from edges inside each block, form directed cycles of the super nodes (if not empty). We also show that such block structure always exists in coherent MWN in SM, hence uniquely defining the coherence.

With the characteristic block structure, we can connect all coherent networks with the \textit{identity-transformed MWN} where the transformation of each edge is $\mathbf{I}$. The identity-transformed MWN is a typical coherent MWN, and has supra-Laplacian
\begin{align*}
    \bar{\mathcal{L}} = \bar{\mathbf{L}} \otimes \mathbf{I},
\end{align*}
where $\bar{\mathbf{L}} = \mathbf{D} - \bar{\mathbf{W}}$ and $\bar{\mathbf{W}} = (w_{ij})$ only maintains the magnitude of each edge weight, thus $\bar{\mathbf{L}}$ is the classic graph Laplacian for positive scalar weights.
For a coherent MWN, we encode its intrinsic block structure into a block diagonal matrix, $\mathcal{S}$: its $i$-th block on the diagonal returns the transformation of any directed paths from nodes in the first block to the block that node $v_i$ belongs to. Then,
\begin{align*}
    \mathcal{L} = \mathcal{S}^T\bar{\mathcal{L}}\mathcal{S}.
\end{align*}
Hence, the supra-Laplacian also has eigenvalue $0$, and the associated eigenvectors span the subspace
\begin{align*}
    \mathbf{X} = \mathcal{S}^T(\mathbf{1}\otimes \mathbf{I}).
\end{align*}
We also show that the existence of $0$ as an eigenvalue always leads to coherence in SM, hence also intrinsic to the coherent MWN.  

\paragraph{Consensus dynamics.}
Consensus has been one of the most important dynamics on networks, with various applications \cite{lynch1996algorithm,balch2002robot,russel2002satellite,fax2004vehicle}.
Note that matrix relationship (or \textit{logic constraints}) has been considered in the belief formation of multiple statements \cite{friedkin2016belief}, but we incorporate transformation in the propagation process, as we will see later.
Now, let us endow each node with a vector state variable, $\mathbf{y}_i\in\mathbb{R}^{n_d}$, and the interdependence between the dimensions are encoded in the matrix weights $\{\mathbf{W}_{ij}\}$.
The \textit{consensus dynamics} can then be defined as
\begin{align*}
    \Dot{\mathbf{y}}_i = \sum_{j}w_{ij}(\mathbf{R}_{ij}\mathbf{y}_j - \mathbf{y}_i),
\end{align*}
i.e., each node adjusts its state such that the difference to its neighbors, after the characteristic transformation, is reduced, with information from all dimensions integrated.
Summarising the states of all nodes into a vector $\mathbf{y}= (\mathbf{y}_1; \dots; \mathbf{y}_n)$, we have $\Dot{\mathbf{y}} = -\mathcal{L}\mathbf{y}$.

The node state vectors then exhibit distinct behaviors in the multidimensional space, depending on the underlying MWN.
If the network is coherent, multi-consensus can be achieved at the steady state, where nodes in different blocks are characterized by different vectors (see SM for more detail).
In all other networks, the consensus dynamics will reach the steady state of all-zero vector for all nodes, i.e., global consensus, in a sufficiently long time, with
the relaxation time depending on the level of coherence.
We show later that the consensus dynamics on such networks can exhibit time-scale separation, where the states of nodes approach multi-consensus in a short time while converging to the all-zero state in the long run.

\paragraph{Random walks.} Random walks is another important dynamics on networks, aiming at modeling how an entity randomly explores the underlying structure, with applications in various fields \cite{masuda2017RW,fisher1966rwchem,pinkski1976pagerank,page1999pagerank,okubo2001diffusion,ewens2004mathematical,gold2007neuralDM,gyllingberg2024minimal}.
Now, let us consider a random walker that can not only randomly move on the network but change its states in its characteristic space $\mathbb{R}^{n_d}$. Specifically, for a random walker on node $v_i$ at time $t$, it will choose one of $v_i$'s neighbors randomly at $t+1$, with probability $w_{ij}/d_i$ for each neighbor $v_j$, while the characteristic vector will also transform according to the edge connecting them, through $\mathbf{R}_{ij}$.
Hence, if we average the characteristic vectors at each node by the corresponding probability, denoted by $\mathbf{y}_i$ for each node $v_i$, the (discrete-time) \textit{random walks} on the network can be described as
\begin{align*}
    \mathbf{y}_j(t+1)
    = \sum_{i}\frac{w_{ij}}{d_i}\mathbf{R}_{ij}^T\mathbf{y}_i(t)
    = \sum_{i}\frac{\mathbf{W}_{ij}^T\mathbf{y}_i(t)}{d_i}.
\end{align*}
Furthermore, if we consider $\mathbf{y} = (\mathbf{y}_1; \dots; \mathbf{y}_n)$, then $\mathbf{y}(t+1) = \mathcal{P}^T\mathbf{y}(t) = (\mathcal{P}^T)^{t+1}\mathbf{y}(0)$, where $\mathcal{P}$ is the supra-transition matrix.

With multidimensional interaction incorporated, the random walkers can explore the multidimensional sphere and exhibit averaged behavior that varies in accordance with the structure of MWN. If the network is coherent (and not bipartite), the averaged vector at the steady state will have different directions for nodes in different blocks (see SM for more detail).
In all other networks, either there will be no steady state, or the averaged vector of each node will be all zero at steady state, i.e., there is no particular direction that is preferred by the random walkers. The relaxation time depends on the level of coherence.

\paragraph{Numerical results.}
We validate our theoretical results by using a matrix-weighted stochastic block model (MSBM).
We generate a base network using a standard stochastic block model with $n_b$ blocks, edge probability $p_{\text{in}}$ within communities and $p_{\text{out}}$ between communities ($p_{\text{in}} \geq p_{\text{out}}$).
We first create a random base rotation matrix $\mathbf{R} \in \mathbb{R}^{n_d \times n_d}$~\footnote{
We apply a series of random 2D rotations, for any pair of dimensions in a random order, to the identity matrix, with the angle drawn from a uniform distribution over $[-\pi, \pi]$. 
}.
For nodes $v_i$ and $v_j$ in blocks $k$ and $\ell$ respectively, we then set $\mathbf{R}_{ij} = \mathbf{R}^{\ell - k}$, ensuring coherence.
To introduce variability, we perturb each $\mathbf{R}_{ij}$ based on a disturbance parameter $\eta \in [0,1]$ and a noise intensity $\sigma \geq 0$. With probability $\eta$, we apply additional random rotations with angles drawn from $\mathcal{N}(0, \sigma^2)$.
We use 120 nodes divided into equal-sized communities. We test various configurations with dimensions $n_d \in \{2, 3, 5, 10\}$, edge probabilities $p_{\text{in}} = 0.3$ and $p_{\text{out}} \in \{0.1, 0.3\}$, number of communities $n_b \in \{2, 3, 4\}$, perturbation probabilities $\eta \in \{0, 0.2\}$, and noise intensities $\sigma^2 \in \{0, 0.1, 0.3\}$.
All the configurations produced consistent results, and we report the representative results (see also SM).

\begin{figure*}[htbp]
    \centering
    \includegraphics[width=\linewidth]{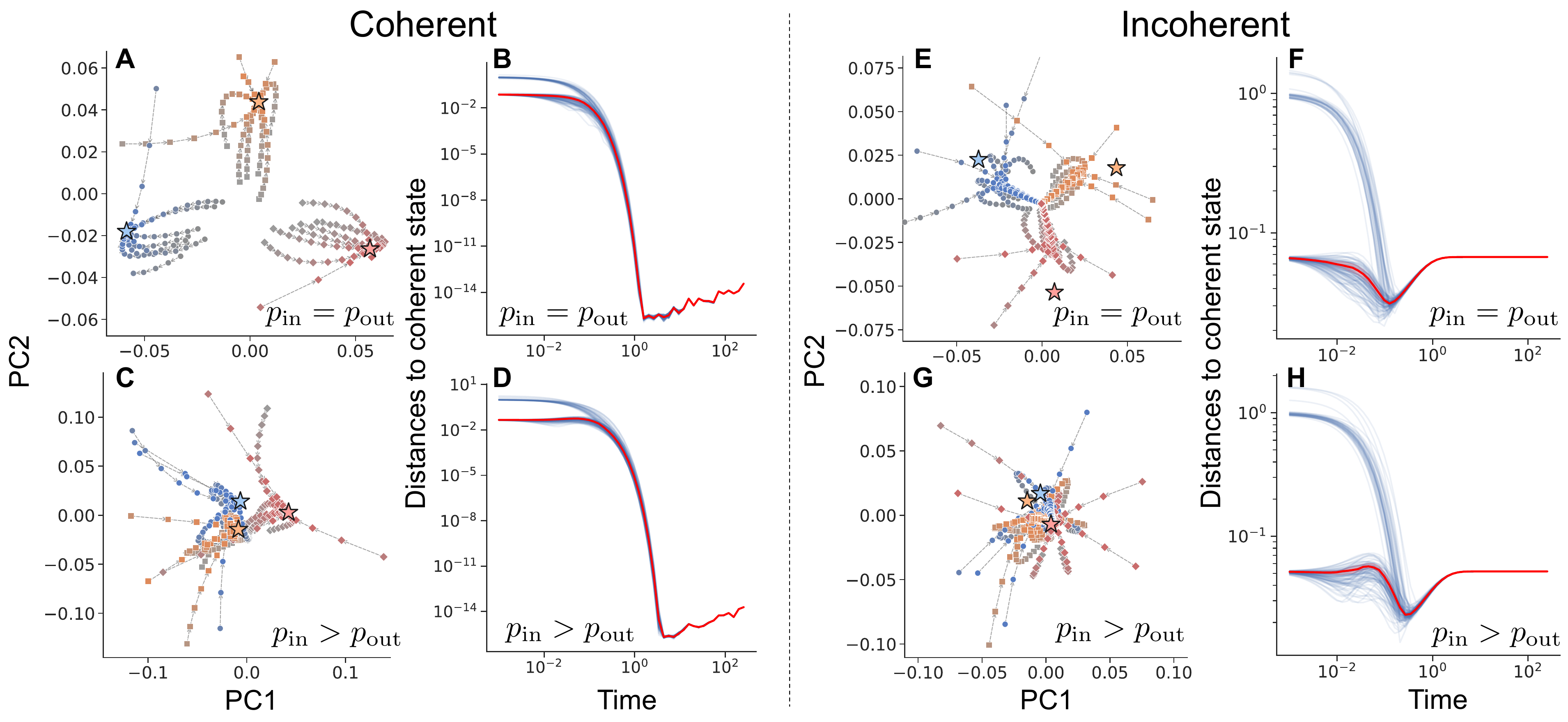}
    \caption{Consensus dynamics in a three-block MSBM with $120$ nodes and $10$ dimensional state space.
    Red, yellow, and blue represent nodes in blocks 1, 2, and 3, respectively.
    Stars indicate theoretical steady states.
    We set $n_d = 10$, $p_{\text{in}} = 0.3$, $p_{\text{out}} = \{0.1, 0.3\}$.
    {\bf A, B}: Coherent case, no topological community structure.
    {\bf C, D}: Coherent case, strong topological community structure.
    {\bf E, F}: Incoherent case, no topological community structure.
    {\bf G, H}: Incoherent case, strong topological community structure.
    Panels A, C, E, G show PCA projections of state vector trajectories.
    In panels B, D, F, H, each blue line represents the distance to the steady state of a node, and the red line represents the average distance.
    The colorband indicates the 95\% confidence interval estimated by $10^3$ bootstrap sampling.
    }
    \label{fig:sim-results-consensus}
\end{figure*}

The consensus dynamics on coherent MWNs align closely with our theoretical predictions (Fig.~\ref{fig:sim-results-consensus}A--D).
The node states converge towards their theoretical steady states, followed by a slow deviation within the distance of $10^{-13}$.
The incoherent case shows a distinct behavior: the state vectors first approach the steady state in the corresponding coherent case in a shorter time scale, followed by a notable deviation from the theoretical steady states (Figs.~\ref{fig:sim-results-consensus}E--H).
We observed the same behavior for different dimensions $n_d$ and different numbers of communities $n_b$ (see Robustness Analysis section in SM).
Notably, these dynamics for both coherent and incoherent cases remain consistent regardless of the presence ($p_{\text{in}}> p_{\text{out}}$) or absence ($p_{\text{in}}= p_{\text{out}}$) of topological community structure in the base network.

\begin{figure*}[htbp]
    \centering
    \includegraphics[width=\linewidth]{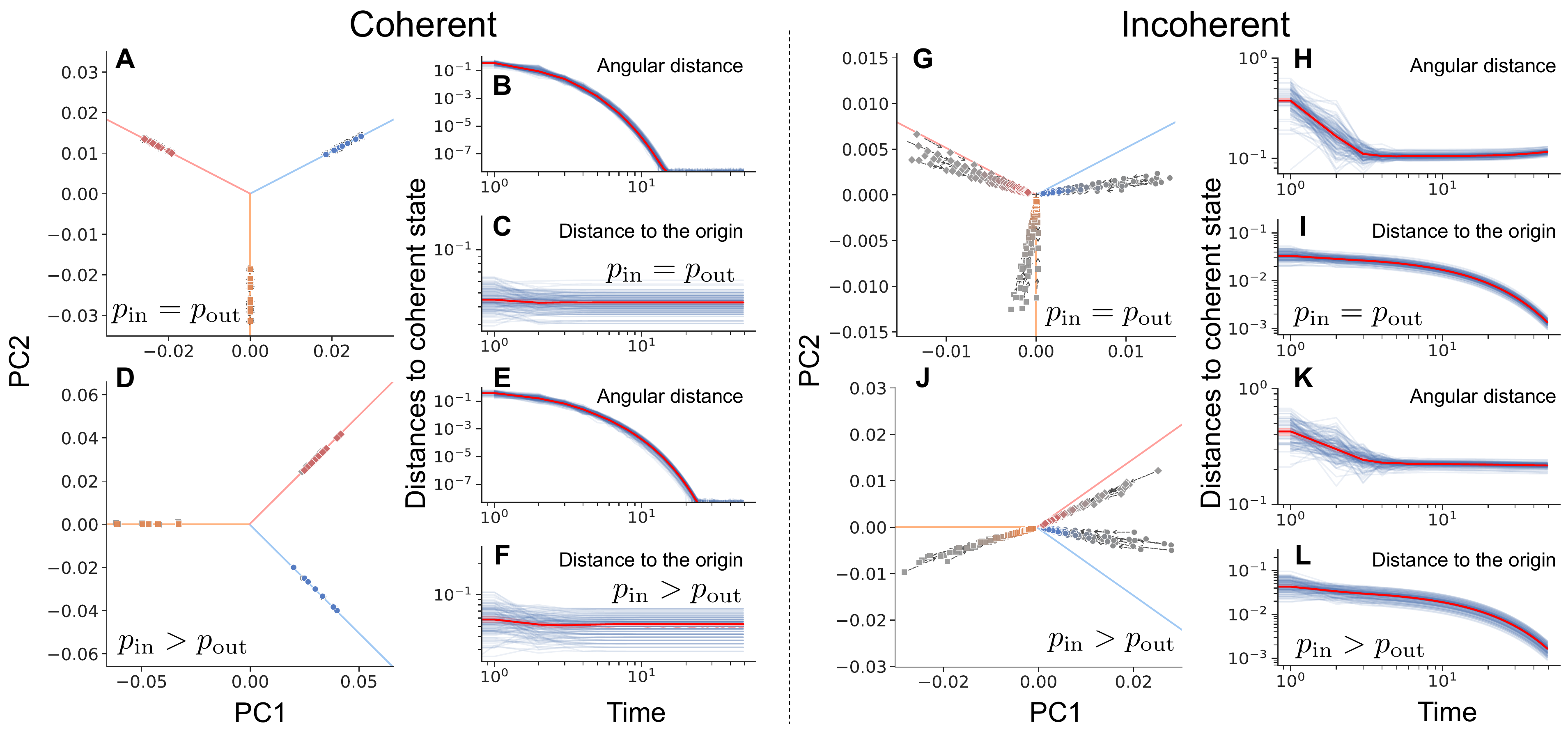}
    \caption{Random walk dynamics in a three-block MSBM with $120$ nodes and $10$ dimensional state space.
    Red, yellow, and blue represent nodes in blocks 1, 2, and 3, respectively.
    The directions of the theoretical steady states are indicated by solid lines.
    We set $n_d = 10$, $p_{\text{in}} = 0.3$, $p_{\text{out}} = \{0.1, 0.3\}$.
    {\bf A--C}: Coherent case, no topological community structure.
    {\bf D--F}: Coherent case, strong topological community structure.
    {\bf G--I}: Incoherent case, no topological community structure.
    {\bf J--L}: Incoherent case, strong topological community structure.
    Panels A, D, G, J show PCA projections of state vector trajectories.
    Panels B, E, H, K show angular distances to steady state (blue: individual nodes, red: average).
    Panels C, F, I, L show distances to origin (blue: individual nodes, red: average).
    The colorband indicates the 95\% confidence interval estimated by $10^3$ bootstrap sampling.
    }
    \label{fig:sim-results-random-walks}
\end{figure*}

The random walk dynamics on MWNs align with our theoretical predictions (Fig.~\ref{fig:sim-results-random-walks}). We compare node states to their predicted steady states using angular distance rather than Euclidean distance, since they have the same direction instead of the exact states. We note that theoretical predictions focus on average behavior and are not normalized to the unit sphere, while individual node states in the coherent case are on the unit sphere.
In coherent networks, node states converge angularly to their predicted steady states (Figs.~\ref{fig:sim-results-random-walks}A, B, D, and E).
For incoherent networks, we observe a two-phase behavior. Initially, node states converge angularly to the steady state, followed by a stable deviation (Figs.~\ref{fig:sim-results-random-walks}G, H, J, K). The node states are drawn towards the origin (Figs.~\ref{fig:sim-results-random-walks}G, I, J, L), which occurs because node states are randomly mixed and averaged, smoothing out differences and resulting in a stable state at the origin.

\paragraph{Discussions.}
In this paper, we introduced a novel framework of MWNs, laying the mathematical groundwork for analyzing both their structural and dynamical properties. This approach opens new avenues for research in network science. For instance, random walks underpin numerous techniques for extracting insights from network data, such as determining node importance through centrality measures \cite{gleich2015pagerank} and developing efficient community detection algorithms \cite{rosvall2008maps,lambiotte2014random}. Our framework provides an opportunity to extend these methods to datasets characterized by high-dimensional, interacting features. Our work also opens several fundamental questions for future research. In particular, we have explored in detail a strict notion of coherence in this paper but, as we discuss in SM, relaxed versions of coherence could be considered, leading to their own challenges.

This work participates in the recent efforts to enrich the standard network paradigm to ``higher orders" \cite{lambiotte2019networks}.
MWNs could provide a new angle to analyze and generalize models of multilayer networks \cite{kivela2014multilayer}  and temporal networks \cite{holme2012temporal},
emphasizing the interaction between different dimensions (or layers).
Take a system where each node is copied in $n_d$ layers. Its connections via interlayer and intralayer connections to other nodes can be encoded in a matrix, making it possible to represent the system as a MWN. In addition, our definition of the matrix of a path, as the product of matrices of each of its edges, is a natural generalization of switching networks \cite{masuda2013temporal}, a popular model for temporal networks, where the system is modeled as a random sequence of matrices, instead of a sequence determined by a walk on a graph as we consider here.
In fact, any network with any node groupings can be modeled as an MWN, by defining the state of a node group as a vector and the interaction between groups as a matrix.
We hope that MWNs not only deepen our understanding of complex systems but also provide a flexible framework for modeling a wide range of systems that have multidimensional dynamics at multiple scales.



\begin{acknowledgments}{\bf Acknowledgments}
R.L. acknowledges support from the EPSRC grants EP/V013068/1, EP/V03474X/1 and EP/Y028872/1.
Y.T.~is funded by the Wallenberg Initiative on Networks and Quantum Information (WINQ).
\end{acknowledgments}

\bibliography{refs}

\appendix

\section{SUPPLEMENTAL MATERIAL}
\subsection{Relaxed notion of coherence}
At the core of the notion of coherence for matrix-weighted networks (MWNs), there is the condition that the transformation associated with any directed cycle $C$, e.g., starting at node $v_i$ and composed of the edges $e_1, e_2 \ldots, e_k$,
\begin{align*}
    \mathbf{R}(C) = \mathbf{R}(e_1)\mathbf{R}(e_2)\dots\mathbf{R}(e_k),
\end{align*}
is the identity matrix.
Intuitively, the idea is that a signal, starting at node $v_i$, after it propagates along any cycle, will be mapped onto the same signal, solely from the transformations. This makes the existence of a non-trivial steady state possible.
However, there could be situations where this condition is too strict.

One way to relax this condition would be to impose that only vectors along certain directions remain unchanged.
We can consider the case when the matrix weights of all edges in $C$ share the same eigenvector $\mathbf{x}$ associated with eigenvalue $1$ - a natural condition to ensure the conservation of a quantity along an edge.
Then, coherence would be verified within this subspace spanned by $\mathbf{x}$, making it possible for non-trivial solutions for dynamical processes defined on MWNs.
This definition of coherence could even be sufficient in the long time limit, in situations when $1$ is the largest eigenvalue - this is the case for the transition matrix of random walks for instance -, and its corresponding eigenvector asymptotically dominates the dynamics.
Alternative formulations could also consider cases when the shared eigenvector is not always associated with eigenvalue $1$ but the product of such eigenvalues is $1$ along $C$.
We leave these questions for future research.

\subsection{Additional mathematical results.}
In this section, we give formal statements of the theoretical results presented in the paper and their detailed proofs.

\begin{proposition}
    The supra-Laplacian $\mathcal{L}$ is positive semi-definite.
    \label{pro:L-semi-definite}
\end{proposition}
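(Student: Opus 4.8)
The plan is to verify the defining property of positive semi-definiteness directly: that $\mathbf{x}^T\mathcal{L}\mathbf{x}\ge 0$ for every $\mathbf{x}=(\mathbf{x}_1;\dots;\mathbf{x}_n)$ with $\mathbf{x}_i\in\mathbb{R}^{n_d}$, where symmetry $\mathcal{L}^T=\mathcal{L}$ is inherited from $\mathcal{W}^T=\mathcal{W}$. This amounts to establishing the identity~\eqref{equ:laplacian-bilinear} in detail. First I would expand the quadratic form block-wise: using $\mathcal{D}=\mathbf{D}\otimes\mathbf{I}$ gives $\mathbf{x}^T\mathcal{D}\mathbf{x}=\sum_i d_i\norm{\mathbf{x}_i}_2^2$, and the block structure of $\mathcal{W}$ gives $\mathbf{x}^T\mathcal{W}\mathbf{x}=\sum_{i,j}\mathbf{x}_i^T\mathbf{W}_{ij}\mathbf{x}_j$. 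Substituting $\mathbf{W}_{ij}=w_{ij}\mathbf{R}_{ij}$ and $d_i=\sum_j w_{ij}$ turns $\mathbf{x}^T\mathcal{L}\mathbf{x}$ into $\sum_{i,j}w_{ij}\norm{\mathbf{x}_i}_2^2-\sum_{i,j}w_{ij}\mathbf{x}_i^T\mathbf{R}_{ij}\mathbf{x}_j$.

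Next I would symmetrise using the reciprocity assumptions $w_{ij}=w_{ji}$ and $\mathbf{R}_{ij}=\mathbf{R}_{ji}^T$. Relabelling $i\leftrightarrow j$ shows $\sum_{i,j}w_{ij}\norm{\mathbf{x}_i}_2^2=\tfrac12\sum_{i,j}w_{ij}(\norm{\mathbf{x}_i}_2^2+\norm{\mathbf{x}_j}_2^2)$, so the diagonal part of the form is the symmetric average over each reciprocal edge pair. The cross term is then controlled by the normalisation $\norm{\mathbf{R}_{ij}}_2=1$: Cauchy--Schwarz gives $\mathbf{x}_i^T\mathbf{R}_{ij}\mathbf{x}_j\le\norm{\mathbf{x}_i}_2\norm{\mathbf{R}_{ij}\mathbf{x}_j}_2\le\norm{\mathbf{x}_i}_2\norm{\mathbf{x}_j}_2$, and AM--GM bounds the last quantity by $\tfrac12(\norm{\mathbf{x}_i}_2^2+\norm{\mathbf{x}_j}_2^2)$. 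Summing against the non-negative weights $w_{ij}$ yields $\sum_{i,j}w_{ij}\mathbf{x}_i^T\mathbf{R}_{ij}\mathbf{x}_j\le\tfrac12\sum_{i,j}w_{ij}(\norm{\mathbf{x}_i}_2^2+\norm{\mathbf{x}_j}_2^2)=\sum_i d_i\norm{\mathbf{x}_i}_2^2$, whence $\mathbf{x}^T\mathcal{L}\mathbf{x}\ge 0$.

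I expect the main subtlety to be that the clean ``sum of squared norms'' form of~\eqref{equ:laplacian-bilinear} is an exact identity only when the $\mathbf{R}_{ij}$ are orthogonal, so that $\norm{\mathbf{R}_{ij}\mathbf{x}_j}_2=\norm{\mathbf{x}_j}_2$ and completing the square recovers the displayed equality (this is the coherent case). For general matrix weights with merely $\norm{\mathbf{R}_{ij}}_2=1$, completion of squares only produces an inequality; the Cauchy--Schwarz/AM--GM step above is precisely what handles this general case and keeps the argument independent of orthogonality. Some care is also needed in bookkeeping whether the sums run over ordered pairs or over reciprocal edge pairs, as this governs the factor of $\tfrac12$; the symmetry $w_{ij}=w_{ji}$ is exactly what makes the two accountings consistent.
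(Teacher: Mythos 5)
Your proof is correct, and it takes a genuinely different route from the paper's. The paper writes $\mathcal{L}=\sum_{(v_i,v_j),(v_j,v_i)\in E}\mathcal{L}^{\{i,j\}}$ as a sum of per-edge matrices and completes the square edge by edge, arriving at the exact identity~\eqref{equ:laplacian-bilinear}: each edge contributes $w_{ij}\norm{\mathbf{x}_i-\mathbf{R}_{ij}\mathbf{x}_j}_2^2$. As you correctly observe, the completion-of-squares step there tacitly uses $\mathbf{R}_{ij}^T\mathbf{R}_{ij}=\mathbf{I}$ (so that $\norm{\mathbf{R}_{ij}\mathbf{x}_j}_2=\norm{\mathbf{x}_j}_2$), i.e.\ it is exact only in the isometric/orthogonal setting the paper focuses on; your Cauchy--Schwarz plus AM--GM bound on the cross term $\sum_{i,j}w_{ij}\mathbf{x}_i^T\mathbf{R}_{ij}\mathbf{x}_j\le\sum_i d_i\norm{\mathbf{x}_i}_2^2$ replaces that exactness and needs only $w_{ij}=w_{ji}\ge 0$ and $\norm{\mathbf{R}_{ij}}_2\le 1$, so your argument establishes the proposition for arbitrary contractive transformations, strictly more generally than the paper's computation, and your handling of the factor $\tfrac12$ via relabelling over ordered pairs is sound. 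What you give up is the sum-of-squares identity itself: the paper reuses~\eqref{equ:laplacian-bilinear} in the proof of Theorem~\ref{the:balance-L} to pin down the kernel of $\mathcal{L}$ exactly, via $\mathbf{x}^T\mathcal{L}\mathbf{x}=0 \Leftrightarrow \mathbf{x}_i=\mathbf{R}_{ij}\mathbf{x}_j$ on every edge, which is the engine behind the coherence characterization; an inequality does not deliver this directly, and recovering it from your argument would require tracing the equality cases of Cauchy--Schwarz and AM--GM (which, in the orthogonal case, do recombine to the same condition, but far less transparently). So the paper's proof buys a sharper structural identity tailored to its later results, while yours buys generality and independence from orthogonality.
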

\begin{proof}[Proof of Proposition \ref{pro:L-semi-definite}]
    We write the supra-Laplacian matrix as a sum over the edges of $G$
    \begin{align*}
        \mathcal{L} = \sum_{(v_i,v_j), (v_j,v_i)\in E}\mathcal{L}^{\{i,j\}},
    \end{align*}
    where $\mathcal{L}^{\{i,j\}}\in\mathbb{R}^{nn_d\times nn_d}$ also has block structure, and if we denote its $k,l$ block as $\mathbf{L}^{\{i,j\}}_{kl}\in\mathbb{R}^{n_d\times n_d}$, the only non-zero blocks are
    \begin{align*}
        \mathbf{L}^{\{i,j\}}_{ii} &= \mathbf{L}^{\{i,j\}}_{jj} = \norm{W_{ij}}_2\mathbf{I} = w_{ij}\mathbf{I}\\
        \mathbf{L}^{\{i,j\}}_{ij} &= \mathbf{L}^{\{i,j\}T}_{ji} = -\mathbf{W}_{ij} = -w_{ij}\mathbf{R}_{ij}.
    \end{align*}
    For all $\mathbf{x} = (\mathbf{x}_1; \mathbf{x}_2; \dots; \mathbf{x}_n)$ with $\mathbf{x}_i\in\mathbb{R}^{n_d}$ for each $i$,
    \begin{align*}
        \mathbf{x}^T&\mathcal{L}^{\{i,j\}}\mathbf{x} \\
        &= \mathbf{x}_i^Tw_{ij}\mathbf{x}_i + \mathbf{x}_j^Tw_{ij}\mathbf{x}_j - \mathbf{x}_i^Tw_{ij}\mathbf{R}_{ij}\mathbf{x}_j - \mathbf{x}_j^Tw_{ij}\mathbf{R}_{ij}^T\mathbf{x}_i\\
        &= w_{ij}(\mathbf{x}_i^T\mathbf{x}_i + \mathbf{x}_j^T\mathbf{x}_j - \mathbf{x}_i^T\mathbf{R}_{ij}\mathbf{x}_j - \mathbf{x}_j^T\mathbf{R}_{ij}^T\mathbf{x}_i)\\
        &= w_{ij}(\mathbf{x}_i - \mathbf{R}_{ij}\mathbf{x}_j)^T(\mathbf{x}_i - \mathbf{R}_{ij}\mathbf{x}_j)\\
        &= w_{ij}\norm{\mathbf{x}_i - \mathbf{R}_{ij}\mathbf{x}_j}_2^2 \ge 0.
    \end{align*}
    Hence,
    \begin{align*}
        \mathbf{x}^T\mathcal{L}\mathbf{x} = \sum_{(v_i,v_j), (v_j,v_i)\in E}w_{ij}\norm{\mathbf{x}_i - \mathbf{R}_{ij}\mathbf{x}_j}_2^2 \ge 0.
    \end{align*}
\end{proof}

\begin{definition}
    Let $G = (V, E, \mathcal{W})$ be a matrix-weighted graph.
    $G$ is \textit{coherent} if and only if the transformation of every directed cycle is $\mathbf{I}$.
    \label{def:balance}
\end{definition}

\begin{theorem}[structural theorem for balance]
    A matrix-weighted graph $G$ is coherent if and only if there is a partition $\{V_i\}_{i=1}^{l_p}$ s.t. (i) any edges within each node subset have transformation $\mathbf{I}$, (ii) any edges between the same pair of node subsets have the same transformation, and (iii) if we consider each node subset as a super node, then the transformation of any directed cycle is $\mathbf{I}$.
    \label{the:balance-part}
\end{theorem}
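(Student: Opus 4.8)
The plan is to treat coherence as a balance condition for a gain graph valued in the orthogonal group, and to prove both implications through a single auxiliary fact: in a coherent MWN, not merely every simple directed cycle but every closed \emph{walk} carries the transformation $\mathbf{I}$. I would isolate this as a lemma and prove it by induction on the length of the walk $W$: locate the first repeated vertex, which cuts out a simple directed cycle $C_0$ sitting as a \emph{contiguous} factor $\mathbf{R}(C_0)$ inside the ordered product $\mathbf{R}(W)$; since $\mathbf{R}(C_0)=\mathbf{I}$ by coherence, this factor may be deleted without changing the product, leaving a strictly shorter closed walk to which the induction hypothesis applies. Backtracking steps $v_a\to v_b\to v_a$ are handled identically, their length-two cycle giving $\mathbf{R}_{ab}\mathbf{R}_{ab}^{T}=\mathbf{I}$ from the orthogonality already noted in the main text. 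I would keep the orthogonality of all edge transformations (a consequence of coherence via the reciprocal two-edge cycles) as a standing fact throughout.

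For the forward implication I would build a \emph{potential}. Fixing a reference node $v_r$ (the graph is connected), I set $\mathbf{s}_i=\mathbf{R}(P)$ for any directed path $P$ from $v_r$ to $v_i$; the closed-walk lemma makes this independent of $P$, since two such paths differ by a closed walk of transformation $\mathbf{I}$ (using that a reversed path contributes the transposed, i.e.\ inverse, transformation). Appending an edge $(v_i,v_j)\in E$ to a path reaching $v_i$ yields $\mathbf{s}_j=\mathbf{s}_i\mathbf{R}_{ij}$, hence $\mathbf{R}_{ij}=\mathbf{s}_i^{T}\mathbf{s}_j$, consistent with reciprocity because $(\mathbf{s}_i^{T}\mathbf{s}_j)^{T}=\mathbf{s}_j^{T}\mathbf{s}_i=\mathbf{R}_{ji}$. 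I then define the partition $\{V_i\}$ by grouping nodes of equal potential, yielding finitely many blocks. Property (i) is immediate ($\mathbf{s}_i=\mathbf{s}_j\Rightarrow\mathbf{R}_{ij}=\mathbf{I}$); property (ii) holds because an edge between blocks $V_a,V_b$ has transformation $\mathbf{s}_{(a)}^{T}\mathbf{s}_{(b)}$, depending only on the two blocks; and property (iii) follows from a telescoping product, since a super-node cycle $V_{a_1}\to\cdots\to V_{a_m}\to V_{a_1}$ has transformation $\prod_t \mathbf{s}_{(a_t)}^{T}\mathbf{s}_{(a_{t+1})}=\mathbf{s}_{(a_1)}^{T}\mathbf{s}_{(a_1)}=\mathbf{I}$, using $\mathbf{s}\mathbf{s}^{T}=\mathbf{I}$.

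For the reverse implication I would make the main-text sketch precise by respecting the walk/cycle distinction. Given a partition satisfying (i)--(iii), the super-node graph is itself a coherent MWN: its between-block transformations are well defined by (ii), orthogonal because the reciprocal two-cycle $V_a\to V_b\to V_a$ gives $\mathbf{R}_{ab}\mathbf{R}_{ab}^{T}=\mathbf{I}$ via (iii), and all its simple cycles equal $\mathbf{I}$ by (iii). Applying the closed-walk lemma to this super-node graph, every closed walk of super nodes also has transformation $\mathbf{I}$. Any directed cycle $C$ of $G$ projects block by block to such a closed walk once the within-block steps (each contributing $\mathbf{I}$ by (i)) are deleted; the surviving factors are exactly the between-block transformations in order, so $\mathbf{R}(C)$ equals the transformation of the projected closed walk, hence $\mathbf{I}$. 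A cycle lying entirely inside one block is $\mathbf{I}$ directly by (i).

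The main obstacle is the closed-walk lemma, precisely because matrix multiplication is non-commutative: knowing each simple cycle equals $\mathbf{I}$ does not by itself permit rearranging an arbitrary closed walk into a product of cycles. The induction above circumvents this by only ever removing a cycle occupying a contiguous segment of the ordered product, so that substituting $\mathbf{I}$ is a legitimate, order-respecting cancellation. The remaining steps are routine: verifying that the potential construction is compatible with reciprocity, and confirming that the projection of a $G$-cycle is a genuine closed walk on the super-node graph, which is immediate since consecutive nodes of $C$ lie in the same block or in two blocks joined by a super-node edge.
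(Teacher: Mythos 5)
Your proof is correct, but it takes a genuinely different route from the paper's. For the direction coherence $\Rightarrow$ partition, the paper builds the blocks incrementally: a ``grouping rule'' is applied node by node, neighbors are clustered by shared edge transformation, and an induction on the processed nodes rules out four possible conflicts, each by exhibiting a short (length-$3$ or length-$4$) directed cycle whose transformation would differ from $\mathbf{I}$. You instead prove a closed-walk lemma --- every closed walk in a coherent MWN carries transformation $\mathbf{I}$, obtained by contiguous cancellation of the simple cycle at the first repeated vertex, which correctly respects non-commutativity --- and then define a potential $\mathbf{s}_i$ as the common transformation of all paths from a reference node, taking the blocks to be its level sets; this is the standard gain-graph/switching argument, and it has the bonus that the potentials $\mathbf{s}_i$ are precisely the diagonal blocks $\mathbf{S}_{1\sigma(i)}$ of the matrix $\mathcal{S}$ used in Theorem~\ref{the:balance-L}, so the partition and the switching matrix come from one construction. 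For the converse, the paper disposes of it in two sentences (a cycle either lies inside a block or ``contains one or more directed cycles'' of super nodes); as you note, the projection of a cycle of $G$ onto the super-node graph is in general only a closed walk with repeated super nodes, and since the matrices need not commute one cannot freely regroup the product into cycle factors --- your application of the closed-walk lemma to the super-node graph (after verifying it is itself a coherent MWN with reciprocal, orthogonal super-edges via (ii) and (iii)) supplies exactly the contiguous-cancellation step the paper's sketch glosses over. In short, the paper's argument is local and elementary, while yours is global, makes the delicate direction fully rigorous, and produces the object needed downstream.
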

\begin{proof}[Proof of Theorem \ref{the:balance-part}]
    Suppose that such a partition exists. Then there are two possible cases for each directed cycle: it either completely lies in one of the node subset, thus has transformation $\mathbf{I}$, or contains one or more directed cycles by considering each node subset as a super node and edges within each node subset, thus the overall transformation is still $\mathbf{I}$. Hence, the graph is coherent by Definition \ref{def:balance}.

    Now suppose the graph is coherent. We first note that all matrix weights are then orthogonal matrices, i.e., $\mathbf{R}_{ij}^T = \mathbf{R}_{ij}^{-1}$ for all $(v_i, v_j)\in E$. We then construct the partition by applying the following rule to each node $v_i$: we group its neighbors together in the same node subset if the corresponding edges from $v_i$ have the same transformation matrix, and we also group $v_i$ to the node subset corresponding to transformation $\mathbf{I}$ if there is any, and as a separate one otherwise (the ``grouping rule'') hereafter). We record the nodes to which the group rule has been applied after step $k$ as $\mathcal{V}_k$ and the resulting set of node subsets as $\mathcal{P}_k$, thus $\mathcal{V}_0 = \emptyset$ and $\mathcal{P}_0 = \emptyset$. We will show by induction that the group rule can be applied to all nodes without causing any conflict.

    (i) When $k=1$, it is trivial that applying the group rule does not cause conflict since $\mathcal{V}_0 = \emptyset$ and $\mathcal{P}_0 = \emptyset$.

    (ii) Suppose that applying the grouping rule does not cause any conflict up to step $k-1$. At step $k$, we select $v_k$ as a neighbor of nodes in $\mathcal{V}_{k-1}$, and suppose by contradiction that applying the grouping rule to $v_k$ will cause conflicts. Then there are four possibilities:
    \begin{enumerate}[label={\arabic*)}]
        \item $\exists v_{l}\in \Gamma(v_k)$ s.t. $v_l, v_k$ are in the same node subset in $\mathcal{P}_{k-1}$ but $\mathbf{R}_{kl} \ne \mathbf{I}$;
        \item $\exists v_{l}\in \Gamma(v_k)$ s.t. $v_l, v_k$ are in a different node subsets in $\mathcal{P}_{k-1}$ but $\mathbf{R}_{kl} = \mathbf{I}$;
        \item $\exists v_l,v_m\in \Gamma(v_k)$ s.t. $v_l, v_m$ are in the same node subset in $\mathcal{P}_{k-1}$, but $\mathbf{R}_{kl} \ne \mathbf{R}_{km}$;
        \item $\exists v_l,v_m\in \Gamma(v_k)$ s.t. $v_l, v_m$ are in the different node subsets in $\mathcal{P}_{k-1}$, but $\mathbf{R}_{kl} = \mathbf{R}_{km}$,
    \end{enumerate}
    where $\Gamma(v_i) = \{v_j: (v_i,v_j)\in E\}$. In case 1), $v_l\notin \mathcal{V}_{k-1}$ since $\mathbf{R}_{kl} \ne \mathbf{I}$ and there is no conflict in $\mathcal{P}_{k-1}$. Then $\exists v_h\in \mathcal{V}_{k-1}$ s.t. $v_k, v_l\in\Gamma(v_h)$ and $\mathbf{R}_{hl} = \mathbf{R}_{hk}$. Hence, $\mathbf{R}_{hk}\mathbf{R}_{kl}\mathbf{R}_{lh} \ne \mathbf{I}$, which contradicts the graph being coherent. The contradiction in case 2) can be shown in a similar manner.

    In case 3), since there is no conflict in $\mathcal{P}_{k-1}$, then either $(v_l,v_m)\in E$ with $\mathbf{R}_{lm} = \mathbf{I}$, or $\exists v_h\in \mathcal{V}_{k-1}$ s.t. $v_l, v_m\in \Gamma(v_h)$ and $\mathbf{R}_{hl} = \mathbf{R}_{hm}$. Then in the former, $\mathbf{R}_{kl}\mathbf{R}_{lm}\mathbf{R}_{mk} \ne \mathbf{I}$, while in the latter, $\mathbf{R}_{lh}\mathbf{R}_{hm}\mathbf{R}_{mk}\mathbf{R}_{kl} \ne \mathbf{I}$, both of which contradict the graph being coherent. The contradiction in case 4) can be shown in a similar manner. Hence, applying the grouping rule to $v_k$ will not cause any conflict.

    (iii) Finally, it is straightforward to check that the resulting partition has the features as stated.
\end{proof}

\begin{theorem}
    The supra-Laplacian matrix $\mathcal{L}$ has eigenvalue $0$ if and only if $G$ is coherent. Furthermore, if $\mathcal{L}$ has eigenvalue $0$, the corresponding eigenvectors span the following subspace
    \begin{align*}
        \mathbf{X} = \mathcal{S}^T(\mathbf{1}\otimes \mathbf{I}),
    \end{align*}
    where $\mathcal{S}$ is a block diagonal matrix that encodes the decomposition in Theorem \ref{the:balance-part}. Specifically, let $\{V_i\}_{i=1}^{l_p}$ denote the corresponding partition, and $\mathcal{S}$ has the $i$-th block on the diagonal being $\mathbf{S}_{1\sigma(i)}$, where $\sigma(\cdot)$ returns the node subset that a node is associated with, and $\mathbf{S}_{hl}$ is the transformation of a directed path from nodes in $V_h$ to $V_l$.
    \label{the:balance-L}
\end{theorem}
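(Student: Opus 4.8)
The plan is to route everything through the quadratic form established in Proposition~\ref{pro:L-semi-definite}. Since $\mathcal{L}\succeq 0$, a nonzero vector $\mathbf{x}=(\mathbf{x}_1;\dots;\mathbf{x}_n)$ lies in $\ker\mathcal{L}$ if and only if $\mathbf{x}^T\mathcal{L}\mathbf{x}=0$, which by Eq.~\eqref{equ:laplacian-bilinear} and $w_{ij}>0$ on edges is equivalent to the harmonic (parallel-transport) condition $\mathbf{x}_i=\mathbf{R}_{ij}\mathbf{x}_j$ for every $(v_i,v_j)\in E$. Thus the entire statement reduces to understanding when this linear system admits a nonzero solution and what its solution space looks like; I would prove the two directions and the eigenspace formula from this single reformulation.

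For the ``coherent $\Rightarrow$ eigenvalue $0$'' direction together with the eigenspace characterization, I would use the factorization $\mathcal{L}=\mathcal{S}^T\bar{\mathcal{L}}\mathcal{S}$ recalled before the statement. By Theorem~\ref{the:balance-part} coherence yields the block partition, so $\mathcal{S}$ is well defined (any two directed paths from $V_1$ to $V_{\sigma(i)}$ differ by a cycle, whose transformation is $\mathbf{I}$) and block diagonal with orthogonal blocks, hence $\mathcal{S}\mathcal{S}^T=\mathcal{I}$. Since $\bar{\mathcal{L}}=\bar{\mathbf{L}}\otimes\mathbf{I}$ with $\bar{\mathbf{L}}$ the connected scalar Laplacian, $\ker\bar{\mathcal{L}}=\{\mathbf{1}\otimes\mathbf{v}:\mathbf{v}\in\mathbb{R}^{n_d}\}$. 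Conjugating by the orthogonal $\mathcal{S}$ gives $\ker\mathcal{L}=\mathcal{S}^T\ker\bar{\mathcal{L}}=\mathcal{S}^T(\mathbf{1}\otimes\mathbb{R}^{n_d})$, i.e.\ exactly $\mathbf{X}=\mathcal{S}^T(\mathbf{1}\otimes\mathbf{I})$, an $n_d$-dimensional space. This disposes of one implication and the ``furthermore'' clause simultaneously.

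For the converse I would start from a nonzero $\mathbf{x}\in\ker\mathcal{L}$ and the harmonic condition. Combining $\mathbf{x}_i=\mathbf{R}_{ij}\mathbf{x}_j$ with its reciprocal and $\norm{\mathbf{R}_{ij}}_2=1$ forces $\norm{\mathbf{x}_i}=\norm{\mathbf{x}_j}$ across every edge, so by connectivity all $\mathbf{x}_i$ share a common nonzero norm; iterating the condition along a path shows $\mathbf{x}_i=\mathbf{R}(P)\mathbf{x}_j$, and around any directed cycle $C$ based at $v_i$ we obtain $\mathbf{R}(C)\mathbf{x}_i=\mathbf{x}_i$. The main obstacle is precisely this last step: the identity only says that $\mathbf{R}(C)$ fixes the particular vector $\mathbf{x}_i$, whereas coherence demands $\mathbf{R}(C)=\mathbf{I}$. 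To bridge the gap I would try to promote a single harmonic configuration to a full $n_d$-dimensional family, i.e.\ show $\dim\ker\mathcal{L}=n_d$ so that the fixed vectors $\{\mathbf{x}_i\}$ range over all of $\mathbb{R}^{n_d}$ and force $\mathbf{R}(C)=\mathbf{I}$ on every cycle. This is the delicate point, because a cycle transformation can possess the eigenvalue $1$ (e.g.\ a reflection, or a rotation about an axis when $n_d\ge 3$) without being the identity; ruling this out is where the orthogonal/rotational structure of the weights must be used in full, and it is the step I expect to require the most care (and possibly an additional nondegeneracy hypothesis of the kind hinted at in the relaxed-coherence discussion).
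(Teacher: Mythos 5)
Your reduction of the problem to the harmonic condition $\mathbf{x}_i=\mathbf{R}_{ij}\mathbf{x}_j$ via positive semi-definiteness, and your treatment of the direction ``coherent $\Rightarrow$ eigenvalue $0$'' together with the eigenspace formula, coincide with the paper's own proof: the paper likewise passes through Eq.~\eqref{equ:laplacian-bilinear}, factorizes $\mathcal{L}=\mathcal{S}^T\bar{\mathcal{L}}\mathcal{S}$ with $\bar{\mathcal{L}}=\bar{\mathbf{L}}\otimes\mathbf{I}$, and uses connectedness of the scalar graph to conclude $\ker\mathcal{L}=\mathcal{S}^T(\mathbf{1}\otimes\mathbb{R}^{n_d})$. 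Your extra checks (well-definedness of $\mathcal{S}$ because two parallel paths differ by a cycle, orthogonality of its blocks) are correct and if anything more careful than the paper's.

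The genuine gap is exactly where you flagged it, but the bridge you propose cannot be built. The kernel of $\mathcal{L}$ is isomorphic, via parallel transport from a base node, to the common fixed subspace of all cycle transformations through that node, and this subspace can have any dimension between $0$ and $n_d$; so $\dim\ker\mathcal{L}=n_d$ is simply false for incoherent networks admitting a nonzero harmonic section, and no ``promotion'' of one section to a full family is possible. Concretely, take a triangle with $n_d=3$: pick $\mathbf{R}_{12},\mathbf{R}_{23}\in SO(3)$ arbitrary and choose $\mathbf{R}_{31}$ so that $\mathbf{R}(C)=\mathbf{R}_{12}\mathbf{R}_{23}\mathbf{R}_{31}$ is a nontrivial rotation about an axis $\mathbf{u}$; all reciprocity assumptions hold (each $2$-cycle has transformation $\mathbf{I}$), the graph is incoherent, yet setting $\mathbf{x}_1=\mathbf{u}$ and transporting along the edges produces a nonzero vector with $\mathcal{L}\mathbf{x}=0$, and here $\dim\ker\mathcal{L}=1$. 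Thus the ``only if'' direction of the theorem, as literally stated, fails without a nondegeneracy assumption (trivial common fixed subspace of the holonomy); it does hold, e.g., when all weights lie in $SO(2)$, where a nontrivial rotation fixes no real vector, and it holds generically, which is why the paper's perturbed simulations behave as predicted. You should also know that the paper's own proof does not close this step either: it asserts outright that condition \eqref{equ:balance-L-cond1} admits a nonzero solution if and only if every cycle transformation is $\mathbf{I}$, with no argument for the forward implication. The scenario that breaks it---all edge matrices sharing a fixed vector---is precisely the ``relaxed coherence'' discussed in the paper's Supplemental Material, so your closing suspicion that an additional hypothesis is required is not a weakness of your write-up but an accurate diagnosis of the theorem itself.
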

\begin{proof}[Proof of Theorem \ref{the:balance-L}]
    $\mathcal{L}$ has eigenvalue $0$ if and only if there exists $\mathbf{x} = (\mathbf{x}_1; \mathbf{x}_2; \dots; \mathbf{x}_n)$ with $\mathbf{x}_i\in\mathbb{R}^{n_d}$ for each $i$, s.t.
    \begin{align}
        \mathbf{x}^T\mathcal{L}\mathbf{x}
        &= \sum_{(i,j), (j,i)\in E}w_{ij}\norm{\mathbf{x}_i - \mathbf{R}_{ij}\mathbf{x}_j}_2^2 = 0 \nonumber\\
        \Leftrightarrow \mathbf{x}_i &= \mathbf{R}_{ij}\mathbf{x}_j, \quad \forall (i,j)\in E.
        \label{equ:balance-L-cond1}
    \end{align}
    Condition \eqref{equ:balance-L-cond1} is true if and only if for any directed cycle in $G$, $C = \bigcup_{j=1}^{l}(v_{i_j}, v_{i_{j+1}})$ with the convention that $v_{i_{l+1}} = v_{i_1}$, we have
    \begin{align*}
        \mathbf{R}_{i_1i_2}\mathbf{R}_{i_2i_3}\dots \mathbf{R}_{i_li_{l+1}} = \mathbf{I},
    \end{align*}
    i.e., $G$ is coherent.

    In the case that $0$ is an eigenvalue of $\mathcal{L}$, or $G$ is coherent, from Theorem \ref{the:balance-part}, we can write the supra-Laplacian matrix as
    \begin{align*}
        \mathcal{S}^T\bar{\mathcal{L}}\mathcal{S},
    \end{align*}
    where $\mathcal{S}$ is as defined in Theorem \ref{the:balance-L} and $\bar{\mathcal{L}}$ is the supra-Laplacian with identity matrix $\mathbf{I}\in \mathbb{R}^{n_d\times n_d}$ as the transformation of each edge,
    \begin{align*}
        \bar{\mathcal{L}} = \bar{\mathbf{L}}\otimes \mathbf{I},
    \end{align*}
    where
    \begin{align*}
        \bar{\mathbf{L}} = \mathbf{D} - \bar{\mathbf{W}},
    \end{align*}
    and $\bar{\mathbf{W}} = (w_{ij})\in \mathbb{R}^{n\times n}$ only contains the magnitude information. We note that $\bar{\mathbf{L}}$ is the (ordinary) graph Laplacian corresponding to an undirected graph $\bar{G} = (V, \bar{E}, \bar{\mathbf{W}})$, thus $0$ is an eigenvalue of $\bar{\mathbf{L}}$. Since $\bar{G}$ is connected, there is only one eigenvector associated with $0$, the all-one vector $\mathbf{1}$. The eigenvectors of $\mathcal{L}$ associated with eigenvalue $0$ can be obtained accordingly.
\end{proof}

\begin{proposition}
    The supra-transition matrix $\mathcal{P}$ has spectral radius no greater than $1$, i.e., $\rho(\mathcal{P}) \le 1$ where $\rho(\mathcal{P}) = \max\{\abs{\lambda}: \lambda \text{ eigenvalue of } \mathcal{P}\}$.
    Furthermore, $\mathcal{P}$ has eigenvalue $1$ if and only if $G$ is coherent, and has eigenvalue $-1$ if and only if $G_n=(V, E, -\mathcal{W})$, the graph obtained by negating each transformation matrix of each edge, is coherent.
    \label{pro:balance-P-rho}
\end{proposition}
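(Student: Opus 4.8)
The plan is to symmetrize $\mathcal{P}$ and reduce every claim to a statement about the semidefiniteness of a Laplacian-type matrix. Since $G$ is connected, $\mathcal{D} = \mathbf{D}\otimes\mathbf{I}$ is invertible with strictly positive diagonal, so $\mathcal{D}^{1/2}$ exists and, using $\mathcal{W}^T = \mathcal{W}$, the matrix $\mathcal{P} = \mathcal{D}^{-1}\mathcal{W}$ is similar to the symmetric matrix $\mathcal{P}_{\mathrm{sym}} = \mathcal{D}^{-1/2}\mathcal{W}\mathcal{D}^{-1/2}$ via $\mathcal{P}_{\mathrm{sym}} = \mathcal{D}^{1/2}\mathcal{P}\mathcal{D}^{-1/2}$. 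Hence all eigenvalues of $\mathcal{P}$ are real and coincide with those of $\mathcal{P}_{\mathrm{sym}}$, and it suffices to locate the spectrum of $\mathcal{P}_{\mathrm{sym}}$ in $[-1,1]$.

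For the upper bound I would write $\mathcal{I} - \mathcal{P}_{\mathrm{sym}} = \mathcal{D}^{-1/2}(\mathcal{D}-\mathcal{W})\mathcal{D}^{-1/2} = \mathcal{D}^{-1/2}\mathcal{L}\mathcal{D}^{-1/2}$, which is positive semi-definite because $\mathcal{L}$ is positive semi-definite by Proposition \ref{pro:L-semi-definite} and congruence preserves semidefiniteness; thus every eigenvalue of $\mathcal{P}_{\mathrm{sym}}$ is at most $1$. For the lower bound I would consider $\mathcal{I} + \mathcal{P}_{\mathrm{sym}} = \mathcal{D}^{-1/2}(\mathcal{D}+\mathcal{W})\mathcal{D}^{-1/2}$ and show the ``signless'' matrix $\mathcal{D} + \mathcal{W}$ is positive semi-definite by repeating the edge-decomposition argument of Proposition \ref{pro:L-semi-definite} with the sign of the off-diagonal blocks flipped, which yields $\mathbf{x}^T(\mathcal{D}+\mathcal{W})\mathbf{x} = \sum_{(v_i,v_j),(v_j,v_i)\in E} w_{ij}\norm{\mathbf{x}_i + \mathbf{R}_{ij}\mathbf{x}_j}_2^2 \ge 0$. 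Hence every eigenvalue of $\mathcal{P}_{\mathrm{sym}}$ is at least $-1$, giving $\rho(\mathcal{P})\le 1$.

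For the eigenvalue-$1$ characterization, $\mathcal{P}_{\mathrm{sym}}$ has eigenvalue $1$ iff $\mathcal{D}^{-1/2}\mathcal{L}\mathcal{D}^{-1/2}$ has a nontrivial kernel; since $\mathcal{D}^{-1/2}$ is invertible this occurs iff $\mathcal{L}$ has eigenvalue $0$, which by Theorem \ref{the:balance-L} is equivalent to $G$ being coherent. For the eigenvalue-$(-1)$ characterization, the key observation is that the signless matrix above is itself a supra-Laplacian: negating every weight matrix leaves each magnitude $w_{ij} = \norm{\mathbf{W}_{ij}}_2$ unchanged, hence leaves $\mathcal{D}$ unchanged, and replaces $\mathbf{R}_{ij}$ by $-\mathbf{R}_{ij}$, so $\mathcal{D}+\mathcal{W}$ is exactly the supra-Laplacian $\mathcal{L}_n$ of $G_n = (V,E,-\mathcal{W})$. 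Then $\mathcal{P}_{\mathrm{sym}}$ has eigenvalue $-1$ iff $\mathcal{L}_n$ has eigenvalue $0$, which by Theorem \ref{the:balance-L} applied to $G_n$ is equivalent to $G_n$ being coherent.

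The routine pieces are the symmetrization and the two quadratic-form computations, which mirror Proposition \ref{pro:L-semi-definite} almost verbatim. The one step carrying the real content is recognizing $\mathcal{D}+\mathcal{W}$ as the supra-Laplacian of the negated network $G_n$; this is what allows the $-1$ eigenvalue to inherit the full structural characterization from Theorem \ref{the:balance-L} rather than demanding a separate argument, and it is where I would take care to verify that the degree matrix is genuinely preserved under negation (so that $G$ and $G_n$ share the same $\mathcal{D}$).
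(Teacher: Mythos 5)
Your proof is correct and takes essentially the same route as the paper: bound the spectrum of $\mathcal{P}$ via positive semi-definiteness of $\mathcal{L}$ and of the supra-Laplacian of the negated network $G_n$ (the paper phrases this as $-\mathcal{P}$ being the supra-transition matrix of $G_n$), then invoke Theorem~\ref{the:balance-L} for both the eigenvalue-$1$ and eigenvalue-$(-1)$ characterizations. You are merely more explicit than the paper's terse proof in symmetrizing via $\mathcal{D}^{1/2}$, in writing out the signless quadratic form, and in verifying that $\mathcal{D}$ is unchanged under negation so that $\mathcal{D}+\mathcal{W}$ is exactly $\mathcal{L}_n$.
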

\begin{proof}
    We know that $\mathcal{P} = \mathcal{I} - \mathcal{L}_{rw}$, and $\mathcal{L}_{rw}$ is positive semi-definite from Proposition \ref{pro:L-semi-definite}. Hence,
    \begin{align*}
        \lambda_{max}(\mathcal{P}) \le 1,
    \end{align*}
    where $\lambda_{max}(\cdot)$ returns the maximum eigenvalue of a matrix. Meanwhile, since $-\mathcal{P}$ is the supra-transition matrix of $G_n$,
    \begin{align*}
        \lambda_{max}(-\mathcal{P}) \le 1 \quad \Leftrightarrow \quad \lambda_{min}(\mathcal{P}) \ge -1,
    \end{align*}
    where $\lambda_{min}(\cdot)$ returns the minimum eigenvalue of a matrix. Hence, we have
    \begin{align*}
        \rho(\mathcal{P}) \le 1.
    \end{align*}
    Specifically, $\mathcal{P}$ has eigenvalue $1$ if and only if $0$ is an eigenvalue of $\mathcal{L}_{rw}$, if and only if $G$ is coherent, from Theorem \ref{the:balance-L}. The case of $-1$ being an eigenvalue of $\mathcal{P}$ can be obtained in a similar way.
\end{proof}

\begin{proposition}
    If $G$ is coherent, then the consensus dynamics will reach the steady state $\mathbf{y}_j^*$ for each node $v_j$:
    \begin{displaymath}
        \mathbf{y}_j^{*} = \mathbf{S}_{\sigma(j)1}\bar{\mathbf{y}}(0)/n
    \end{displaymath}
    where $\bar{\mathbf{y}}(0) = \sum_{i=1}^n\mathbf{S}_{1\sigma(i)}\mathbf{y}_i(0)$, is the weighted average of the initial state vector with its characteristic transformation, $\{\mathbf{y}_i(0)\}$ is the set of initial state vectors, $\mathbf{S}_{hl}$ and $\sigma(\cdot)$ are as defined in Theorem \ref{the:balance-L}.
    \label{pro:consensus-balance-steady}
\end{proposition}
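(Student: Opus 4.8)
The plan is to solve the linear system $\dot{\mathbf{y}} = -\mathcal{L}\mathbf{y}$ via the matrix exponential and to identify its long-time limit with an orthogonal projection onto $\ker\mathcal{L}$, which I then evaluate in closed form using the factorization $\mathcal{L} = \mathcal{S}^T\bar{\mathcal{L}}\mathcal{S}$ from Theorem \ref{the:balance-L}. First I would write $\mathbf{y}(t) = e^{-\mathcal{L}t}\mathbf{y}(0)$. Because $\mathcal{L}$ is symmetric and positive semi-definite (Proposition \ref{pro:L-semi-definite}), it admits a spectral decomposition $\mathcal{L} = \sum_k \lambda_k \mathbf{u}_k\mathbf{u}_k^T$ with $\lambda_k \ge 0$, so $e^{-\mathcal{L}t} = \sum_k e^{-\lambda_k t}\mathbf{u}_k\mathbf{u}_k^T$; every term with $\lambda_k > 0$ decays to zero while the zero-eigenvalue terms persist. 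Hence $e^{-\mathcal{L}t}$ converges, as $t \to \infty$, to the orthogonal projector $P$ onto $\ker\mathcal{L}$, and the steady state is $\mathbf{y}^* = P\mathbf{y}(0)$.

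The core of the argument is computing $P$ explicitly. I would first note that $\mathcal{S}$ is orthogonal: it is block diagonal with blocks $\mathbf{S}_{1\sigma(i)}$, each a product of edge transformations, which are all orthogonal under coherence, so $\mathcal{S}^T = \mathcal{S}^{-1}$. Conjugation by the orthogonal matrix $\mathcal{S}$ therefore maps the eigenspaces of $\bar{\mathcal{L}}$ isometrically to those of $\mathcal{L}$, giving $P = \mathcal{S}^T\bar{P}\mathcal{S}$, where $\bar{P}$ is the orthogonal projector onto $\ker\bar{\mathcal{L}}$ (one checks directly that $\mathcal{S}^T\bar{P}\mathcal{S}$ is symmetric, idempotent, and has range $\mathcal{S}^T\ker\bar{\mathcal{L}} = \ker\mathcal{L}$). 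Since $\bar{\mathcal{L}} = \bar{\mathbf{L}}\otimes\mathbf{I}$ with $\bar{\mathbf{L}}$ the Laplacian of a connected graph, $\ker\bar{\mathbf{L}} = \mathrm{span}\{\mathbf{1}\}$ and thus $\bar{P} = \tfrac{1}{n}(\mathbf{1}\mathbf{1}^T)\otimes\mathbf{I}$.

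It then remains to read off the block structure. Applying $\mathcal{S}$ sends the $i$-th block of $\mathbf{y}(0)$ to $\mathbf{S}_{1\sigma(i)}\mathbf{y}_i(0)$; applying $\bar{P}$ replaces every block by the common average $\tfrac{1}{n}\sum_i \mathbf{S}_{1\sigma(i)}\mathbf{y}_i(0) = \tfrac{1}{n}\bar{\mathbf{y}}(0)$; and applying $\mathcal{S}^T$ multiplies the $j$-th block by $\mathbf{S}_{1\sigma(j)}^T$, yielding $\mathbf{y}_j^* = \mathbf{S}_{1\sigma(j)}^T\bar{\mathbf{y}}(0)/n$. The final identification $\mathbf{S}_{1\sigma(j)}^T = \mathbf{S}_{\sigma(j)1}$ (the reverse-path transformation is the inverse, which coincides with the transpose for orthogonal matrices) produces the stated formula. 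I expect the main obstacle to be the bookkeeping in this last stage: verifying that $\mathcal{S}$ is genuinely orthogonal and that the projector conjugates block by block in a manner consistent with the path-transformation identity, rather than the convergence step, which is routine given positive semi-definiteness.
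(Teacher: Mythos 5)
Your proof is correct and follows essentially the same route as the paper's: both solve $\dot{\mathbf{y}}=-\mathcal{L}\mathbf{y}$ via the matrix exponential, use the spectral decomposition of the symmetric positive semi-definite $\mathcal{L}$ to identify the long-time limit with the projection onto $\ker\mathcal{L}$, and evaluate that projector through the factorization $\mathcal{L}=\mathcal{S}^T\bar{\mathcal{L}}\mathcal{S}$ from Theorem~\ref{the:balance-L}, since the paper's rank-one sum $\left(\mathcal{S}^T(\mathbf{1}\otimes\mathbf{I})/\sqrt{n}\right)\left(\mathcal{S}^T(\mathbf{1}\otimes\mathbf{I})/\sqrt{n}\right)^T$ is exactly your $\mathcal{S}^T\bar{P}\mathcal{S}$ with $\bar{P}=\tfrac{1}{n}(\mathbf{1}\mathbf{1}^T)\otimes\mathbf{I}$. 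Your only additions—explicitly verifying that $\mathcal{S}$ is orthogonal, that $\mathcal{S}^T\bar{P}\mathcal{S}$ is the orthogonal projector, and the identity $\mathbf{S}_{1\sigma(j)}^T=\mathbf{S}_{\sigma(j)1}$—are details the paper leaves implicit in its ``obtained accordingly'' step, so no substantive difference or gap exists.
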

\begin{proof}[Proof of Proposition \ref{pro:consensus-balance-steady}]
    From $\Dot{\mathbf{y}} = -\mathcal{L}\mathbf{y}$, we can solve that
    \begin{align*}
        \mathbf{y}(t) = \exp(-\mathcal{L}t)\mathbf{y}(0)
    \end{align*}
    where $\mathbf{y}(0) = (\mathbf{y}_1(0); \dots ; \mathbf{y}_n(0))$ is the initial state vector. Since $\mathcal{L}$ is symmetric, we can consider its unitary decomposition in terms of its eigenvalues $\lambda_1 \le \dots \le \lambda_{nn_d}$ and the associated orthonormal eigenvectors $\mathbf{x}_1, \dots, \mathbf{x}_{nn_d}$, respectively,
    \begin{align*}
        \mathcal{L} = \sum_{i=1}^{nn_d}\lambda_i\mathbf{x}_i\mathbf{x}_i^T.
    \end{align*}
    Hence, we can write
    \begin{align*}
        \mathbf{y}(t) = \sum_{i=1}^{nn_d}\exp(-\lambda_i t)\mathbf{x}_i\mathbf{x}_i^T\mathbf{y}(0).
    \end{align*}
    Then in the case that $G$ is coherent,
    \begin{align*}
        \lim_{t\to \infty}\mathbf{y}(t)
        &= \sum_{i=1}^{n_d}\mathbf{x}_i\mathbf{x}_i^T\mathbf{y}(0)\\
        &= \left(\mathcal{S}^T(\mathbf{1}\otimes \mathbf{I})/\sqrt{n}\right)\left(\mathcal{S}^T(\mathbf{1}\otimes \mathbf{I})/\sqrt{n}\right)^T\mathbf{y}(0),
    \end{align*}
    where $\mathbf{x}_1, \dots, \mathbf{x}_{n_d}$ are the eigenvector associated with eigenvalue $\lambda_1 = \dots = \lambda_{n_d} = 0$, and the second equality is obtained from Theorem \ref{the:balance-L} and appropriate scaling for the eigenvectors to be orthonormal. Hence, the steady state $\mathbf{y}_j^*$ for each node $v_j$ can be obtained accordingly.
\end{proof}

\begin{proposition}
    If $G$ is coherent and is not bipartite, then the random walk will reach the steady state $\mathbf{y}_j^*$ for each node $v_j$:
    \begin{displaymath}
        \mathbf{y}_j^{*T} = \bar{\mathbf{y}}(0)^T\mathbf{S}_{1\sigma(j)}d_j/(2m)
    \end{displaymath}
    where $\bar{\mathbf{y}}(0)^T = \sum_{i=1}^n\mathbf{y}_i(0)^T\mathbf{S}_{\sigma(i)1}$, is the weighted average of the initial characteristic vector with its characteristic transformation, $\{\mathbf{y}_i(0)\}$ is the set of initial characteristic vectors, $2m = \sum_{j}d_j$, $\mathbf{S}_{hl}$ and $\sigma(\cdot)$ are as defined in Theorem \ref{the:balance-L}.
    \label{pro:rw-steady}
\end{proposition}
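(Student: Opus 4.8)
The plan is to mirror the argument used for the consensus steady state (Proposition \ref{pro:consensus-balance-steady}), but now for the discrete-time iteration $\mathbf{y}(t) = (\mathcal{P}^T)^t\mathbf{y}(0)$, so that the long-time behaviour is governed by the spectral projector of $\mathcal{P}^T$ onto its eigenvalue-$1$ eigenspace. The first difficulty is that $\mathcal{P}^T = \mathcal{W}\mathcal{D}^{-1}$ is not symmetric. To handle this I would first symmetrise via the similarity $\mathcal{P}^T = \mathcal{D}^{1/2}\mathcal{P}_s\mathcal{D}^{-1/2}$, where $\mathcal{P}_s = \mathcal{D}^{-1/2}\mathcal{W}\mathcal{D}^{-1/2}$ is symmetric (using $\mathcal{W}^T=\mathcal{W}$). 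Thus $\mathcal{P}^T$ is diagonalisable with real eigenvalues, and writing the orthonormal spectral decomposition $\mathcal{P}_s = \sum_i \mu_i \mathbf{u}_i\mathbf{u}_i^T$ gives
\begin{align*}
    (\mathcal{P}^T)^t = \sum_i \mu_i^t (\mathcal{D}^{1/2}\mathbf{u}_i)(\mathcal{D}^{-1/2}\mathbf{u}_i)^T,
\end{align*}
where the pairing $(\mathcal{D}^{-1/2}\mathbf{u}_i)^T(\mathcal{D}^{1/2}\mathbf{u}_j)=\delta_{ij}$ guarantees that the powers decouple term by term.

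The second, and key, step is to control the spectrum on the unit circle. By Proposition \ref{pro:balance-P-rho} we have $\rho(\mathcal{P})\le 1$, the eigenvalue $1$ is present because $G$ is coherent, and $-1$ is an eigenvalue if and only if $G_n = (V,E,-\mathcal{W})$ is coherent. Here I would use the non-bipartiteness hypothesis: negating every transformation multiplies the transformation of a length-$k$ directed cycle by $(-1)^k$, so since $G$ is coherent this becomes $(-1)^k\mathbf{I}$; as $G$ is connected and not bipartite it contains an odd cycle, for which this is $-\mathbf{I}\neq\mathbf{I}$, hence $G_n$ is not coherent and $-1$ is not an eigenvalue of $\mathcal{P}$. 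Consequently every $\mu_i$ with $\abs{\mu_i}=1$ equals $1$, all remaining eigenvalues satisfy $\abs{\mu_i}<1$ so that $\mu_i^t\to 0$, and only the eigenvalue-$1$ projector survives in the limit.

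The third step is to make the eigenvalue-$1$ eigenspace explicit. Since $\mathcal{L}_{rw} = \mathcal{D}^{-1}\mathcal{L}$, the eigenvalue-$1$ eigenvectors of $\mathcal{P}$ coincide with $\ker\mathcal{L}$, which by Theorem \ref{the:balance-L} is spanned by the columns of $\mathcal{S}^T(\mathbf{1}\otimes\mathbf{I})$; correspondingly the eigenvalue-$1$ eigenvectors of the symmetric $\mathcal{P}_s$ are $\mathbf{u}_k = \mathcal{D}^{1/2}\mathcal{S}^T(\mathbf{1}\otimes\mathbf{e}_k)$ up to normalisation. Using $\mathcal{S}\mathcal{D}\mathcal{S}^T = \mathcal{D}$ (each diagonal block of $\mathcal{S}$ is orthogonal and $\mathcal{D}=\mathbf{D}\otimes\mathbf{I}$ acts blockwise), I would compute $\mathbf{u}_k^T\mathbf{u}_l = 2m\,\delta_{kl}$, fixing the normalisation $1/\sqrt{2m}$ with $2m=\sum_j d_j$. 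Substituting into the surviving projector yields
\begin{align*}
    \lim_{t\to\infty}(\mathcal{P}^T)^t = \frac{1}{2m}\,\mathcal{D}\,\mathcal{S}^T\bigl[(\mathbf{1}\mathbf{1}^T)\otimes\mathbf{I}\bigr]\mathcal{S},
\end{align*}
whose $(j,i)$ block is $\tfrac{d_j}{2m}\mathbf{S}_{\sigma(j)1}\mathbf{S}_{1\sigma(i)}$. Reading off $\mathbf{y}_j^* = \lim_{t\to\infty}\mathbf{y}_j(t)$ and transposing then gives the stated formula $\mathbf{y}_j^{*T} = \bar{\mathbf{y}}(0)^T\mathbf{S}_{1\sigma(j)}\,d_j/(2m)$ with $\bar{\mathbf{y}}(0)^T = \sum_i \mathbf{y}_i(0)^T\mathbf{S}_{\sigma(i)1}$.

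I expect the main obstacle to be the careful treatment of the asymmetry of $\mathcal{P}^T$ — in particular keeping the $\mathcal{D}^{\pm 1/2}$ factors straight so that left and right eigenvectors are paired correctly in the spectral sum and the normalisation $2m$ emerges — together with the rigorous exclusion of the eigenvalue $-1$, which is precisely where the non-bipartite assumption enters and which has no analogue in the continuous-time consensus proof.
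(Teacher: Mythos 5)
Your proof is correct and follows essentially the same route as the paper's: symmetrising the transition matrix via $\mathcal{D}^{\pm 1/2}$, identifying the eigenvalue-$1$ eigenspace with $\ker\mathcal{L} = \mathrm{span}\,\mathcal{S}^T(\mathbf{1}\otimes\mathbf{I})$ through Theorem \ref{the:balance-L}, excluding $-1$ via Proposition \ref{pro:balance-P-rho} together with non-bipartiteness, and passing to the limit of the spectral projector to read off $\mathbf{y}_j^{*T} = \bar{\mathbf{y}}(0)^T\mathbf{S}_{1\sigma(j)}d_j/(2m)$. If anything, you are more careful than the paper at two points it glosses over: the explicit odd-cycle argument showing $G_n$ is incoherent, and the computation $\mathcal{S}\mathcal{D}\mathcal{S}^T = \mathcal{D}$ yielding $\mathbf{u}_k^T\mathbf{u}_l = 2m\,\delta_{kl}$, where the paper merely assumes orthonormality ``for illustrative purposes.''
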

\begin{proof}[Proof of Proposition \ref{pro:rw-steady}]
    For random walks,
    \begin{align*}
        \mathbf{y}(t)^T = \mathbf{y}(0)^T\mathcal{P}^{t}.
    \end{align*}
    Note that $\mathcal{P}$ is not symmetric, but is similar to a symmetric matrix $\mathcal{P}_s = \mathcal{D}^{-1/2}\mathcal{W}\mathcal{D}^{-1/2}$, where
    $\mathcal{P} = \mathcal{D}^{-1/2}\mathcal{P}_s\mathcal{D}^{1/2}$ thus for each eigenpair $(\lambda, \mathcal{D}^{1/2}\mathbf{x})$ of $\mathcal{P}_s$,  $(\lambda, \mathbf{x})$ is a right eigenpair of $\mathcal{P}$. We denote the eigenvalues by $\lambda_1 \le \dots, \le \lambda_{nn_d}$, with the associated eigenvectors of $\mathcal{P}$ by $\mathbf{x}_1, \dots, \mathbf{x}_{nn_d}$, thus the eigenvectors of $\mathcal{P}_s$ are $\mathcal{D}^{1/2}\mathbf{x}_1, \dots, \mathcal{D}^{1/2}\mathbf{x}_{nn_d}$. For illustrative purposes, we assume $\mathcal{D}^{1/2}\mathbf{x}_1, \dots, \mathcal{D}^{1/2}\mathbf{x}_{nn_d}$ to be orthonormal.

    We know that $\mathcal{P}_s = \mathcal{I} - \mathcal{L}_{sys}$, where $\mathcal{L}_{sys} = \mathcal{D}^{-1/2}\mathcal{L}\mathcal{D}^{-1/2}$.
    Hence, $\mathcal{L}_{sys}$ and $\mathcal{P}_{s}$ share the same eigenvectors. Meanwhile, $\mathbf{x}$ is an eigenvector of $\mathcal{L}$ associated with eigenvalue $0$ if and only if $\mathcal{D}^{1/2}\mathbf{x}$ is an eigenvector of $\mathcal{L}_{sys}$ associated with $0$. Hence, the eigenvectors of $\mathcal{P}_s$ associated with $1$ span the subspace,
    \begin{align*}
        \mathbf{X} = \mathcal{D}^{1/2}\mathcal{S}^T(\mathbf{1}\otimes \mathbf{I}),
    \end{align*}
    where $\mathcal{S}$ is a block diagonal matrix that encodes the decomposition in Theorem \ref{the:balance-part}, from Theorem \ref{the:balance-L}.

    In the case that $G$ is coherent and is not bipartite, $1$ is an eigenvalue of $\mathcal{P}$ but not $-1$, from Proposition \ref{pro:balance-P-rho}. Hence, we have
    \begin{align*}
        \lim_{t\to\infty} \mathcal{P}_s^t &= \lim_{t\to\infty}\sum_{i=1}^{nn_d}\lambda_i^t\left(\mathcal{D}^{1/2}\mathbf{x}_i\right)\left(\mathcal{D}^{1/2}\mathbf{x}_i\right)^T\\
        &= \left(\mathcal{D}^{1/2}\mathcal{S}^T(\mathbf{1}\otimes \mathbf{I})/\sqrt{2m}\right)\left(\mathcal{D}^{1/2}\mathcal{S}^T(\mathbf{1}\otimes \mathbf{I})\sqrt{2m}\right)^T,
    \end{align*}
    where $2m = \sum_{j}d_j$. Hence,
    \begin{align*}
       \lim_{t\to\infty}\mathbf{y}(t)^T &= \lim_{t\to\infty}\mathbf{y}(0)^T\mathcal{P}^{t} = \lim_{t\to\infty}\mathbf{y}(0)^T\mathcal{D}^{-1/2}\mathcal{P}_s^t\mathcal{D}^{1/2} \\
       &= \mathbf{y}(0)^T\mathcal{S}^T(\mathbf{1}\otimes \mathbf{I})/\sqrt{2m}\left(\mathcal{S}^T(\mathbf{1}\otimes \mathbf{I})\sqrt{2m}\right)^T\mathcal{D}.
    \end{align*}
    The steady state $\mathbf{y}_j^*$ for each node $v_j$ can be obtained accordingly.
\end{proof}

\subsection{Robustness Analysis}
\label{sec:sim-results-robustness}

We observe that our results are consistently robust across different numbers of communities and dimensions of the state space (Figs.~\ref{fig:sim-results-consensus-dynamics-var-dim}, \ref{fig:sim-results-consensus-dynamics-var-com}, \ref{fig:sim-results-random-walk-var-dim}, and \ref{fig:sim-results-random-walk-var-com}).
The results demonstrate that the consensus dynamics for a three-block MSBM with 120 nodes in a 3-dimensional state space, and for a two-community MSBM with 120 nodes in a 3-dimensional state space align with our previous observations for three-block models in 10-dimensional space (Fig.~\ref{fig:sim-results-consensus-dynamics-var-dim} and \ref{fig:sim-results-consensus-dynamics-var-com}). In all cases, we see similar patterns of convergence to steady states, with coherent cases showing convergence to the theoretical steady state and incoherent cases exhibiting initial convergence to the theoretical steady states followed by stable deviations.
Similarly, for the random walk dynamics, the number of communities and the dimension of the state space do not affect the convergence patterns (Figs.~\ref{fig:sim-results-random-walk-var-dim} and \ref{fig:sim-results-random-walk-var-com}).
The impact of topological community structure on the dynamics is also consistent across these different configurations. This consistency across varying numbers of communities and state space dimensions underscores the robustness of our findings.

\begin{figure*}[htbp]
    \centering
    \includegraphics[width=\linewidth]{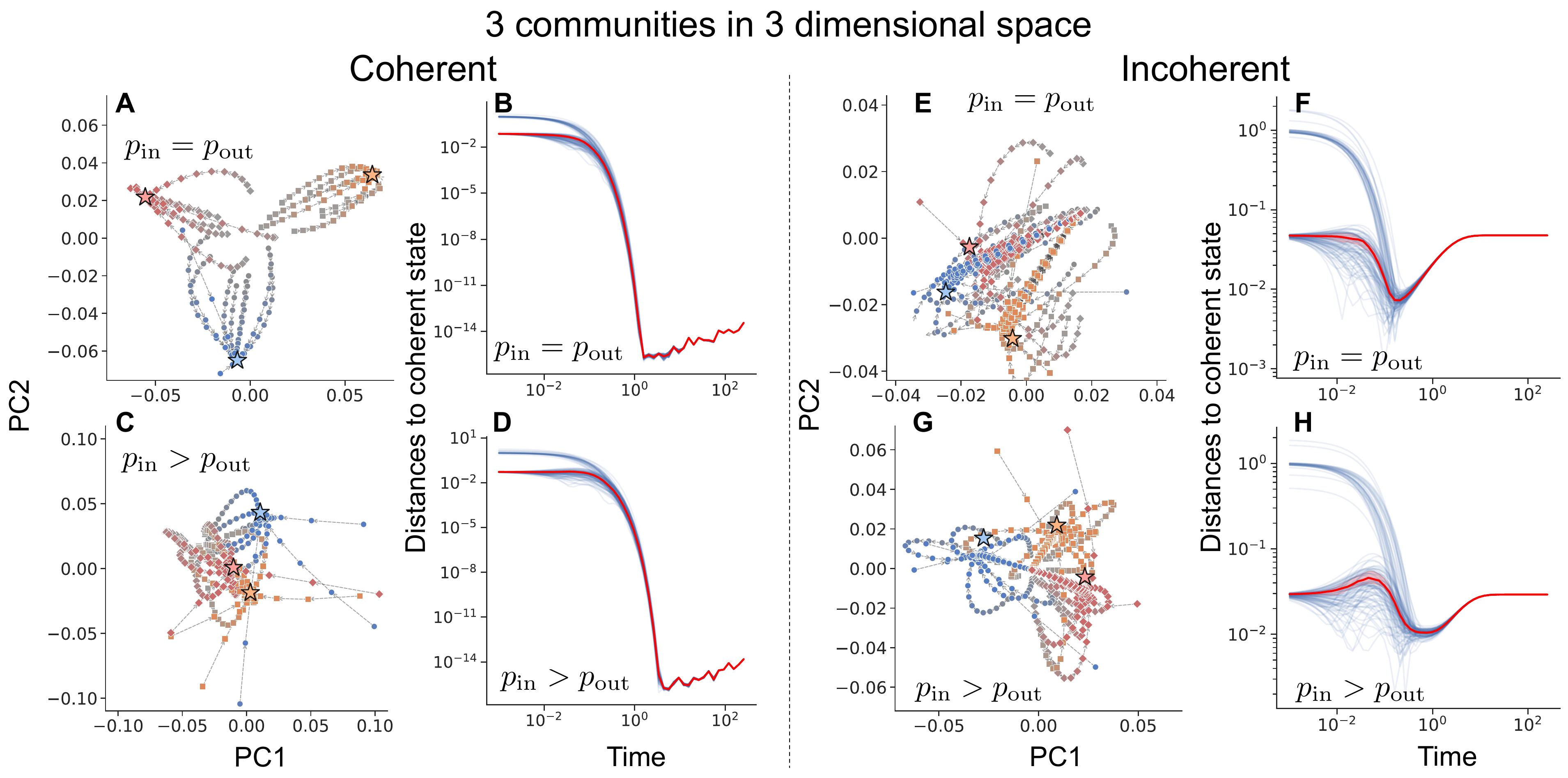}
    \caption{Consensus dynamics in a three-block MSBM with 120 nodes and 3-dimensional state space.
    Red, yellow, and blue represent nodes in blocks 1, 2, and 3, respectively.
    The stars indicate the theoretical steady states.
    We set $n_d = 3$, $p_{\text{in}} = 0.3$, $p_{\text{out}} = \{0.1, 0.3\}$.
    {\bf A--C}: Coherent case, no topological community structure.
    {\bf D--F}: Coherent case, strong topological community structure.
    {\bf G--I}: Incoherent case, no topological community structure.
    {\bf J--L}: Incoherent case, strong topological community structure.
    Panels A, D, G, J show PCA projections of state vector trajectories.
    Panels B, E, H, K show distances to steady state (blue: individual nodes, red: average).
    Panels C, F, I, L show distances to origin (blue: individual nodes, red: average).
    The colorband indicates the 95\% confidence interval estimated by $10^3$ bootstrap sampling.
    }
    \label{fig:sim-results-consensus-dynamics-var-dim}
\end{figure*}

\begin{figure*}[htbp]
    \centering
    \includegraphics[width=\linewidth]{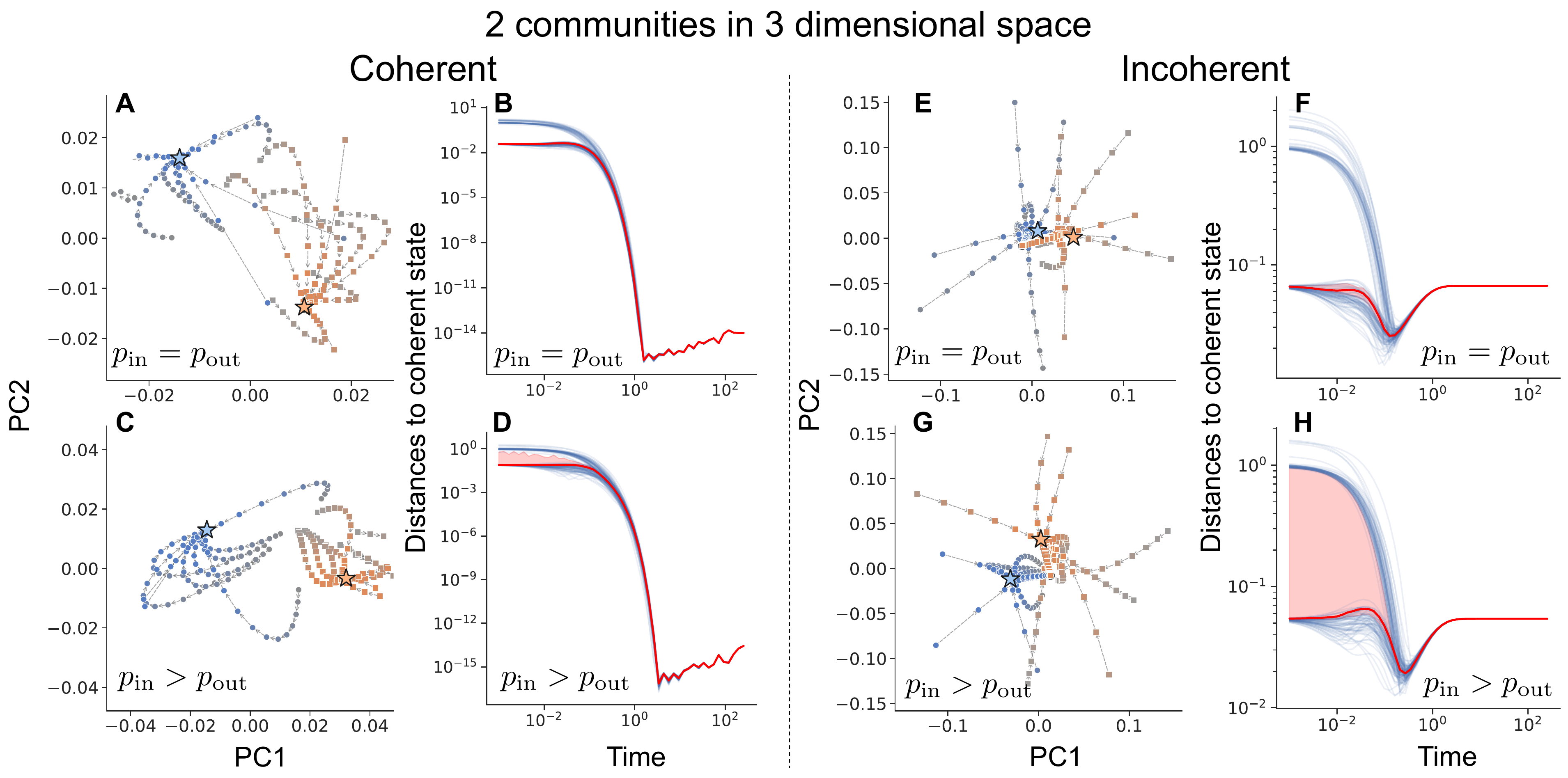}
    \caption{Consensus dynamics in a three-block MSBM with 120 nodes of two communities and 3-dimensional state space.
    Yellow and blue represent nodes in blocks 1 and 2, respectively.
    The stars indicate the theoretical steady states.
    We set $n_d = 3$, $p_{\text{in}} = 0.3$, $p_{\text{out}} = \{0.1, 0.3\}$.
    {\bf A--C}: Coherent case, no topological community structure.
    {\bf D--F}: Coherent case, strong topological community structure.
    {\bf G--I}: Incoherent case, no topological community structure.
    {\bf J--L}: Incoherent case, strong topological community structure.
    Panels A, D, G, J show PCA projections of state vector trajectories.
    Panels B, E, H, K show distances to steady state (blue: individual nodes, red: average).
    Panels C, F, I, L show distances to origin (blue: individual nodes, red: average).
    The colorband indicates the 95\% confidence interval estimated by $10^3$ bootstrap sampling.
    }
    \label{fig:sim-results-consensus-dynamics-var-com}
\end{figure*}

\begin{figure*}[htbp]
    \centering
    \includegraphics[width=\linewidth]{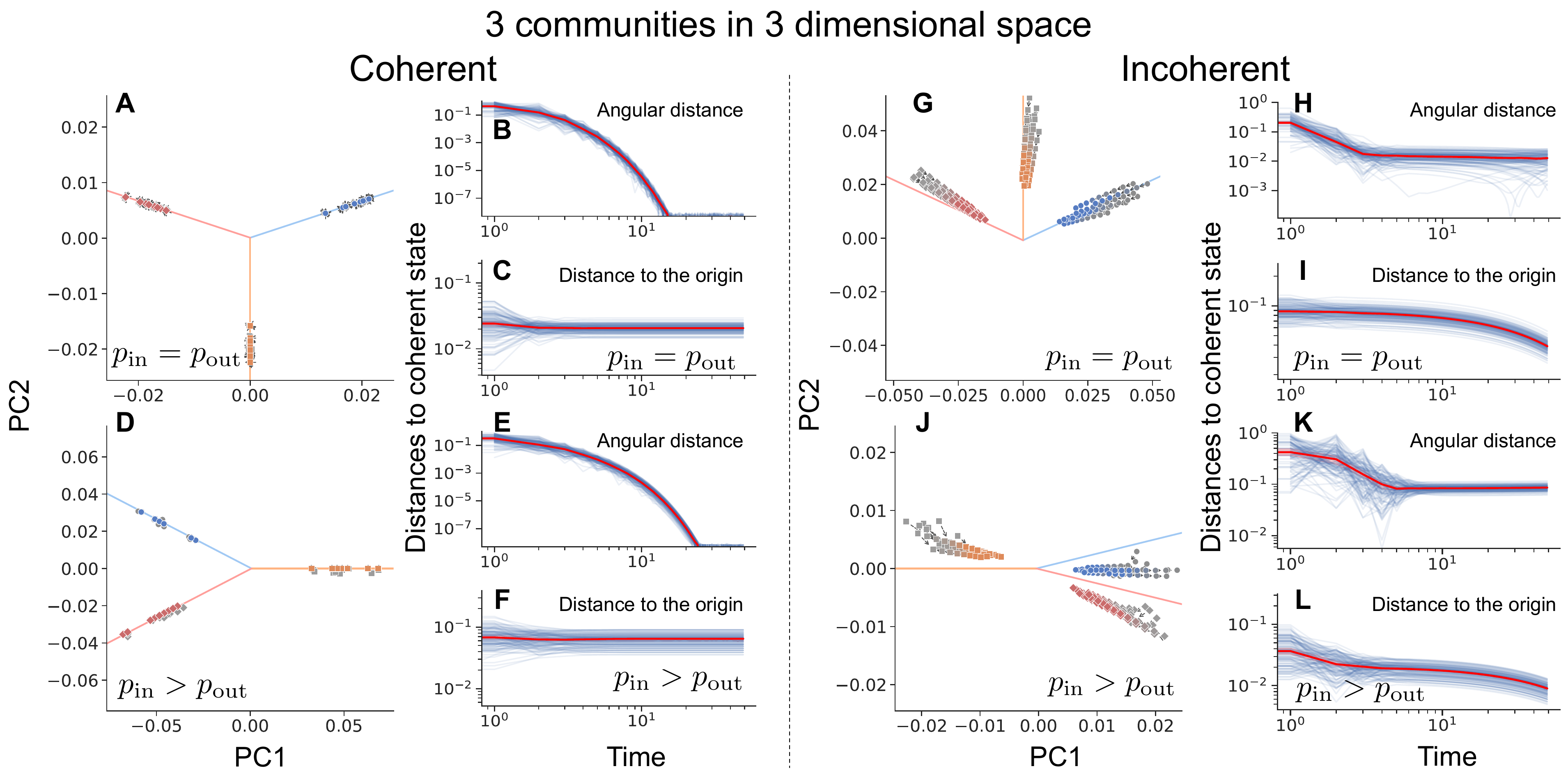}
    \caption{Random walk dynamics in a three-block MSBM with 120 nodes and 3-dimensional state space.
    Red, yellow, and blue represent nodes in blocks 1, 2, and 3, respectively.
    The directions of the theoretical steady states are indicated by solid lines.
    We set $n_d = 3$, $p_{\text{in}} = 0.3$, $p_{\text{out}} = \{0.1, 0.3\}$.
    {\bf A--C}: Coherent case, no topological community structure.
    {\bf D--F}: Coherent case, strong topological community structure.
    {\bf G--I}: Incoherent case, no topological community structure.
    {\bf J--L}: Incoherent case, strong topological community structure.
    Panels A, D, G, J show PCA projections of state vector trajectories.
    Panels B, E, H, K show angular distances to steady state (blue: individual nodes, red: average).
    Panels C, F, I, L show distances to origin (blue: individual nodes, red: average).
    The colorband indicates the 95\% confidence interval estimated by $10^3$ bootstrap sampling.
    }
    \label{fig:sim-results-random-walk-var-dim}
\end{figure*}

\begin{figure*}[htbp]
    \centering
    \includegraphics[width=\linewidth]{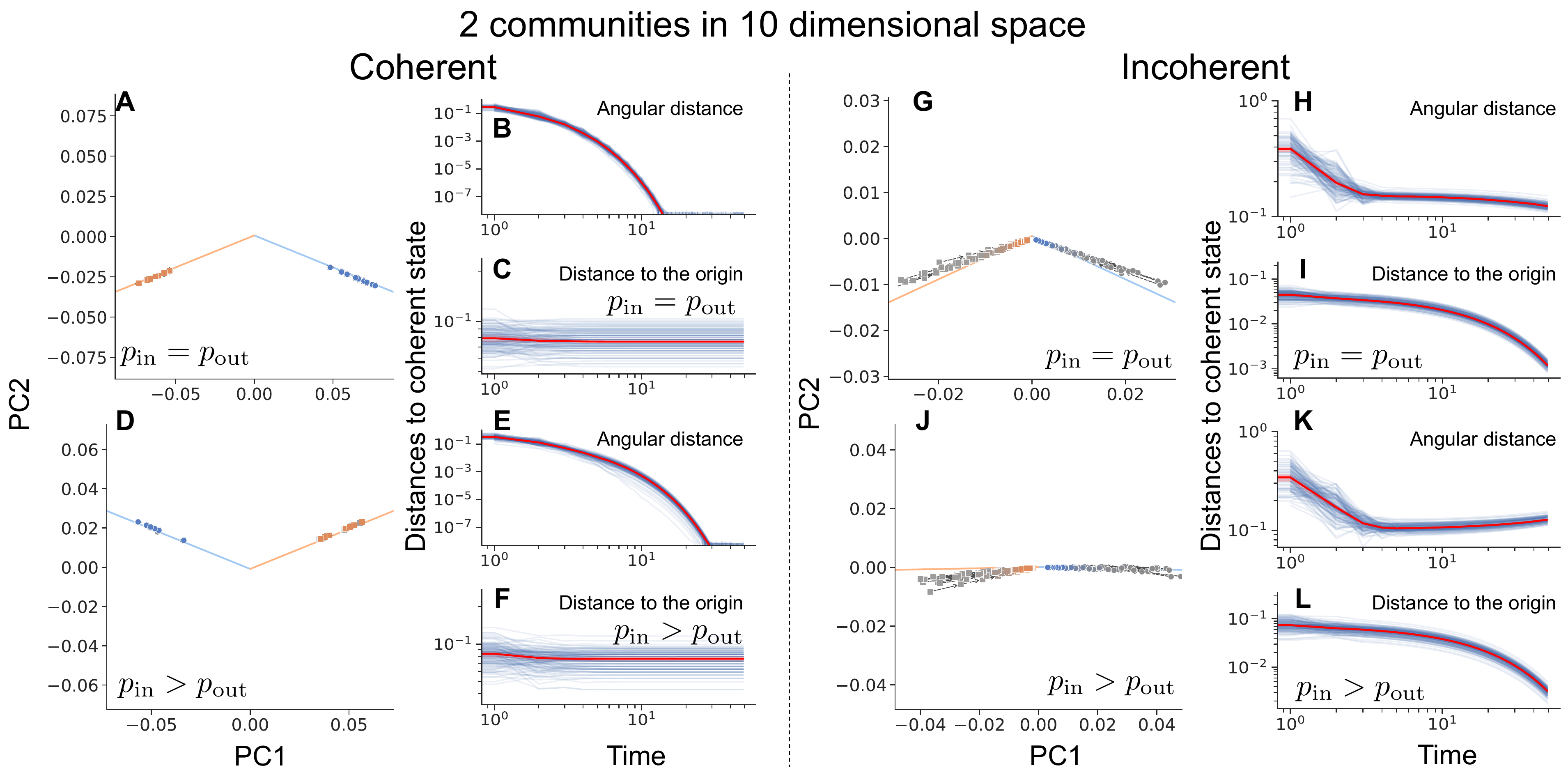}
    \caption{Random walk dynamics in a three-block MSBM with 120 nodes of two communities and 10-dimensional state space.
    Yellow, and blue represent nodes in blocks 1 and 2, respectively.
    The directions of the theoretical steady states are indicated by solid lines.
    We set $n_d = 10$, $p_{\text{in}} = 0.3$, $p_{\text{out}} = \{0.1, 0.3\}$.
    {\bf A--C}: Coherent case, no topological community structure.
    {\bf D--F}: Coherent case, strong topological community structure.
    {\bf G--I}: Incoherent case, no topological community structure.
    {\bf J--L}: Incoherent case, strong topological community structure.
    Panels A, D, G, J show PCA projections of state vector trajectories.
    Panels B, E, H, K show angular distances to steady state (blue: individual nodes, red: average).
    Panels C, F, I, L show distances to origin (blue: individual nodes, red: average).
    The colorband indicates the 95\% confidence interval estimated by $10^3$ bootstrap sampling.
    }
    \label{fig:sim-results-random-walk-var-com}
\end{figure*}

\end{document}